\DeclareMathAlphabet{\mathpzc}{OT1}{pzc}{m}{it}
\newtheorem{theorem}{Theorem}[section]
\begin{document}

\vspace*{0.4 in}
\title{The $d$-dimensional softcore Coulomb potential and the\smallskip \\generalized confluent Heun equation}
\author{Richard L. Hall}
\email{richard.hall@concordia.ca}
\affiliation{Department of Mathematics and Statistics, Concordia University,
1455 de Maisonneuve Boulevard West, Montr\'{e}al,
Qu\'{e}bec, Canada H3G 1M8}

\author{Nasser Saad}
\email{nsaad@upei.ca}
\affiliation{School of Mathematical and Computational Sciences,
University of Prince Edward Island, 550 University Avenue,
Charlottetown, PEI, Canada C1A 4P3.}
\author{ Kyle R. Bryenton }
\email{kbryenton@upei.ca}
\affiliation{School of Mathematical and Computational Sciences,
University of Prince Edward Island, 550 University Avenue,
Charlottetown, PEI, Canada C1A 4P3.}
\begin{abstract}
\noindent \textbf{Abstract:} 
An analysis of the generalized confluent Heun equation $(\alpha_2r^2+\alpha_1r)\,y''+(\beta_2r^2+\beta_1r+\beta_0)\,y'-(\varepsilon_1r+\varepsilon_0)\,y=0$ in $d$-dimensional space, where  $\{\alpha_i, \beta_i, \varepsilon_i\}$ are real parameters, is presented. With the aid of these general results, the quasi exact solvability of the Schr\"odinger eigen problem generated  by the softcore Coulomb potential $V(r)=-e^2Z/(r+b),\, b>0$, is explicitly resolved.   Necessary and sufficient conditions for polynomial solvability are given.  A three-term recurrence relation is provided to generate the coefficients of polynomial solutions explicitly.  We prove that these polynomial solutions are sources of finite sequences of orthogonal polynomials. Properties such as recurrence relations, Christoffel-Darboux formulas, and the moments of the weight function are discussed. We also reveal a factorization property of these polynomials which permits the construction of other interesting related sequences of orthogonal polynomials.  

\end{abstract}

\pacs{33C05,  03.65, 33C47, 47E05, 34B60, 33C15, 34A05, 34B30, 81Q05.}

\keywords{$d$-Dimensional Schr\"{o}dinger equation, softcore Coulomb potential, cut-off Coulomb potential, Generalized Confluent Heun Equation, polynomial solutions, finite sequences of orthogonal polynomials, Christoffel-Darboux formula, weight functions.}

\maketitle
\section{Introduction} 

\noindent We study the $d$-dimensional radial Schr\"odinger equation
(in atomic units $\hbar=2\mu=1$),
\begin{equation}\label{eq1}
\left[-\dfrac{d^2}{dr^2}+\dfrac{(k-1)(k-3)}{4r^2}+V(r)-E\right] \psi(r)=0 \, , \quad
\int\limits_0^\infty \left\{\psi(r)\right\}^2dr=1 \, , \quad
\psi(0)=0 \, , \quad 
V(r)=-\dfrac{e^2\,Z}{r+b},
\end{equation}
where $d > 1$, $\ell$ is the angular-momentum quantum number, $ k = 2\,\ell +d,$  and $b >0$ is a cut-off parameter.  Thus $E = E(e^2 Z, b)$. The number of potential parameters can be reduced by  scaling: for example the change of variable $r \rightarrow b\,r$  with $v = b\, e^2\, Z$ and  $E_{{\mathfrak n}k}(v) = b^2 E(e^2 Z, b)$ transforms equation \eqref{eq1} into the following
\begin{equation}\label{eq2}
\left[-\dfrac{d^2}{dr^2}+ \dfrac{(k-1)(k-3)}{4r^2} - \dfrac{v}{r+1} \right] \psi_{{\mathfrak n}k}(r)=E_{{\mathfrak n}k}(v)\, \psi_{{\mathfrak n}k}(r) ,\quad v > 0,
\end{equation}
where ${\mathfrak n} = 0,1,2,\dots$ is the number of nodes in the radial wave function $\psi_{{\mathfrak n}k}(r)$.
\vskip0.1true in
\noindent  The relation between Schr\"odinger's equation  \eqref{eq2} for $\ell\neq 0$ and the confluent Heun equation was pointed earlier by H. Exton \cite{exton}, (see also \cite{saad2009}). In the present work we study this connection in more detail. This interrelationship in turn involves a deeper analysis of the generalized confluent Heun equation, 
\begin{align}\label{eq3}
r\,(\alpha_2r+\alpha_1)\,f''(r)+(\beta_2r^2+\beta_1r+\beta_0)\,f'(r)-(\varepsilon_1r+\varepsilon_0)\,f(r)=0,
\end{align}
 that can be rewritten in the self-adjoint form
\begin{align*}
\dfrac{1}{r^{\frac{\beta_0}{\alpha_1}-1} e^{\frac{\beta_2\, r}{\alpha_2}} (\alpha_1+\alpha_2\, r)^{\frac{\beta_1}{\alpha_2}-\frac{\alpha_1 \beta_2}{\alpha_2^2}-\frac{\beta_0}{\alpha_1}-1}}\dfrac{d}{dr}\left[r^{\frac{\beta_0}{\alpha_1}} e^{\frac{\beta_2\, r}{\alpha_2}} (\alpha_1+\alpha_2\, r)^{\frac{\beta_1}{\alpha_2}-\frac{\alpha_1 \beta_2}{\alpha_2^2}-\frac{\beta_0}{\alpha_1}}\dfrac{df(r)}{dr}\right]-\varepsilon_1\, r\, f(r)=\varepsilon_0\, f(r),
\end{align*}
 where    $\{\alpha_i, \beta_i, \varepsilon_i\}$ are real parameters for which $\alpha_2$ and $\alpha_1$ are not simultaneously zero with no common factor between the polynomial coefficients in \eqref{eq3}. The differential equation \eqref{eq3} has two regular singular points
$r_0=0$ and
$r_1=- \alpha_1/\alpha_2$  with the exponents at the singularity  $\{0,1-\beta_0/\alpha_1\}$ and $\{0,1+(\beta_0/\alpha_1) - \beta/\alpha_2 + (\alpha_1 \beta_2)/\alpha_2^2\}$ in addition to one irregular singular point at $r_\infty=\infty$. If $\alpha_1\cdot\alpha_2\geq 0$, the domain of definition is $(0,\infty)$ while if $\alpha_1\cdot\alpha_2<0$, the domain of definition reads $(0,-\alpha_2/\alpha_1)$ with the assumption that the solution $f(r)$ vanishes at the domain boundary. 
\vskip0.1true in
\noindent  The applications of equation \eqref{eq3} are not limited to spectral problems generated by the softcore Coulomb potential: indeed, differential equations of the form \eqref{eq3} have played a key role for many diverse problems of theoretical physics \cite{exton,saad2009,ron1995,ru2011,chris2011,cheb2004,jaick2008,Ma:Po,fiz2010,fiz2011,Rh:Ns:Ks2011,Rh:Ns:Ns,Hc:Rh}.  Hence, the problem studied here contributes to the ongoing analysis of Heun's equation and its confluent cases, bridging elusive problems  in physics to mathematical analysis \cite{ron1995}. 
\vskip0.1true in
\noindent In addition to an irregular singular point at $r = \infty$, the differential equation (\ref{eq2}) has an additional regular singular
point at $r = 0$ with exponents given by the roots of the indicial equation 
$
4s(s-1)-(k-1)(k-3)=0,
$ namely $s=(k-1)/2$ and $s=(3-k)/2$. For large $r$, the differential equation \eqref{eq2} behaves as 
\begin{equation*}
-\dfrac{d^2\psi_{{\mathfrak n}k}(r)}{dr^2}=E_{{\mathfrak n}k}(v) \,  \psi_{{\mathfrak n}k}(r).
\end{equation*}
\noindent The asymptotic solution allows us to factorize the wave function solution of \eqref{eq2} in the  form  
\begin{equation}\label{eq4}
\psi_{{\mathfrak n}k}(r)=r^{(k-1)/2}\exp\left(-\mathscr{E}_{{\mathfrak n}k}\,r\right)f(r),\qquad \mathscr{E}_{{\mathfrak n}k}=\sqrt{-E_{{\mathfrak n}k}(v)}>0.
\end{equation}
Substituting this ansatz solution into equation \eqref{eq2} gives for $f(r)$ the following differential equation 
\begin{align}\label{eq5}
(r^2+r)f''(r)&+\left(-2\mathscr{E}_{{\mathfrak n}k}\,r^2+(k-1-2\,\mathscr{E}_{{\mathfrak n}k})\,r+k-1\right)f'(r)-\left(((k-1)\,\mathscr{E}_{{\mathfrak n}k}- v)\,r+(k-1)\,\mathscr{E}_{{\mathfrak n}k}\right)f(r)=0,\end{align}
that is clearly a special case of the generalized confluent Heun equation \eqref{eq3}. 
\vskip0.1true in
\noindent The purpose of this article is threefold. First, we analyze the conditions on the equation parameters of \eqref{eq3} so as to admit the polynomial solutions $f_n(r)=\sum_{j=0}^n \mathcal C_j r^j$, $n=0,1,2,\dots$, and we set up a scheme to evaluate the polynomial coefficients $\mathcal C_k$ explicitly.  Second, we establish a correspondence between each solution $f_n(r)$ and a finite sequence of orthogonal polynomials $\{\mathpzc{P}_j(\varepsilon_0)\}_{j=0}^n$ for which we consider some mathematical properties, particularly the three-term recurrence relation, Christoffel-Darboux formulas, and the moments of the weight function. In addition to these properties, we also show that these polynomial solutions exhibit a factorization property similar to that of the Bender-Dunne Orthogonal Polynomials \cite{bender1996}. This factorization property, in turn, allows the construction of   associated (infinite) classes of orthogonal polynomials.  Thirdly, as a practical use of these polynomial solutions, we study the quasi-exact solvability of the $d$-dimensional Schr\"odinger equation with the scaled interaction $V(r)=-v/(r+1)$ as given by \eqref{eq2} which, for $b > 0$, is essentially equivalent to the unscaled form $V(r) = - e^2 Z/(r + b).$ 
\vskip0.1true in
\noindent The present article is organized as follows. In Section II, we explore the solvability constraints on the equation \eqref{eq3} and we design a procedure for constructing the polynomial solutions $f_n(r)=\sum_{j=0}^\infty \mathcal C_j \,r^k$ for each $n$. In Section III, we establish a correspondence between that the solution $y_n$ and a set of orthogonal polynomials $\{\mathpzc{P}_j\}_{j=0}^n$. The recurrence relations, the Christoffel-Darboux formulas, the weight functions, and the factorization property are presented.   In Section IV, we apply these results in the study of the $d$-dimensional radial Schr\"odinger equation with the softcore Coulomb potential given by equation \eqref{eq2}. 
\section{Generalized confluent Heun equation: polynomial solutions}
\noindent The $n^{th}$ derivative of \eqref{eq3} reads 
\begin{align}\label{eq6}
& \left[ a_{2} \,r^{2} + a_{1}\, r\, \right]\, f^{(n+2)} (r)\, + \, \left[ \beta_2\,r^2+ \left( \beta_1+2\,n\,a_2 \right) r+n\,a_1+\beta_0 \right]\, f^{(n+1)}(r) \notag\\
& \qquad + \big[ (2\,n\,\beta_2-\varepsilon_1)\,r+n\,(n-1)\,a_2+n\,\beta_1-\varepsilon_0 \big] \,f^{(n)}(r) + \big[ n\,(n-1)\,\beta_2 - n\,\varepsilon_1  \big] \,f^{(n-1)}(r)=0 \, ,\quad n=0,1,\dots \, .
\end{align}
Thus, for the existence of an $n^{th}$ degree polynomial solution $f_n(r)=\sum_{j=0}^n\mathcal{C}_j\,r^j$, it is necessary that 
\begin{align}\label{eq7}
\varepsilon_{1;n} \equiv \varepsilon_1=n\, \beta_2, \qquad (n=0,1,2,\dots).
\end{align}
On the other hand, an application of the Frobenius Method establishes for the coefficients of the polynomial solutions $f_n(r)=\sum_{k=0}^n{\mathcal C}_k\,r^k$ to 
the differential equation
\begin{align*}
\frac{\dfrac{d}{dr}\left[r^{\beta_0/\alpha_1} e^{\beta_2\, r/\alpha_2}(\alpha_1+\alpha_2\, r)^{-\frac{\alpha_1\beta_2}{\alpha_2^2}-\frac{\beta_0}{\alpha_1}+\frac{\beta_1}{\alpha_2}}\dfrac{df_n(r)}{dr}\right]}{r^{\frac{\beta_0}{\alpha_1}-1} e^{\frac{\beta_2 r}{\alpha_2}} (\alpha_1+\alpha_2 r)^{-\frac{\alpha_1\beta_2}{\alpha_2^2}-\frac{\beta_0}{\alpha_1}+\frac{\beta_1}{\alpha_2}-1}}
-n\,\beta_2\,r\, f_n(r)=\varepsilon_{0,n}\,f_n(r),\qquad n=0,1,2,\dots,
\end{align*}
the following three-term recurrence relation
\begin{equation}\label{eq8}
\big[ (j+1)\left(j\,a_1+\beta_0\right) \big] \,\mathcal{C}_{j+1} + \big[ j\,(j-1)a_2+j\,\beta_1-\varepsilon_{0,n} \big] \,\mathcal{C}_{j} +\big[ \left(j-1\right)\beta_{2}-\varepsilon_{1,n} \big] \,\mathcal{C}_{j-1} = 0 \, , \qquad j=0,1,2,\dots,n+1,
\end{equation} 
which allows the explicit computation of all the coefficients $\{\mathcal{C}_j\}_{j=0}^n$ of the polynomial solutions initiated with  $\mathcal{C}_{-1}=0$ and $\mathcal{C}_{0}=1$. 

\vskip0.1true in
\noindent  For each $n$, the recurrence relation \eqref{eq8} generates $n+2$ linear equations in the coefficients $\mathcal{C}_{j}, j=0,1,\dots, n$. The first $n$ equations may used to evaluate, using say Cramer's rule,  the coefficients $\mathcal{C}_{j}$, $j=1,2,\dots,n$, in terms of the non-zero coefficient $\mathcal{C}_{0}$. The equation indicated by $j=n$ gives the sufficient (polynomial) condition for $\varepsilon_{0,n}$ in terms of the other equation parameters which guarantees the existence of a polynomial solution. The very last equation of the linear system indicated by $j=n+1$ replicate the necessary condition \eqref{eq7}. It is not difficult to show that the sufficient (polynomial) condition is equivalent to the vanishing of the $(n+1)^{th}$-order determinant given by the Jacobi Tri-diagonal Matrix:\\

\begin{minipage}{0.5\textwidth} \centering
	\begin{center}
$\Delta_{n+1}$~~=~~\begin{tabular}{|lllllll|}
 $S_0$&$T_1$& & & & & \\
 $\gamma_1$&$S_1$&$T_2$& & & & \\
 & $\gamma_2$&$S_2$&$T_3$& & & \\
 & &$\ddots$&$\ddots$&$\ddots$& &\\ 
 & & &$\gamma_{n-2}$&$S_{n-2}$&$T_{n-1}$&  \\
 & & & &$\gamma_{n-1}$&$S_{n-1}$&$T_n$\\
 & & & & &$\gamma_{n}$&$S_n$\\
\end{tabular}~~, \quad
\end{center}
\end{minipage} \quad
\begin{minipage}{0.35\textwidth} for
\begin{align*}
S_j&=j \big((j-1)\alpha_2+\beta_1 \big)-\varepsilon_{0,n}\, ,\notag\\
T_j&=j \big((j-1)\alpha_1+\beta_0 \big)\, ,\notag\\
\gamma_j&= - \big(n-j+1 \big)\,\beta_2\, ,\hskip1true in
\end{align*}
\end{minipage}
\vskip0.1true in
\noindent where all the other entries being zero.   
\vskip0.1true in
\noindent The determinant $\Delta_{n+1}=0$ gives an $n$-degree polynomial of the discrete variable $\varepsilon_{0,n_i}, 1\leq i\leq n$. Here, the index $i$ introduced in reference to distinct roots of  $\Delta_{n+1}=0$ for each given $n$. The values of $\varepsilon_{0,n_i}$ may be regarded as the roots of the square matrix $L_{n+1}=\Delta_{n+1}+\varepsilon_{0,n}\, I$ where $I$ is the $(n+1)$-identity matrix.  For the real square matrix $L_{n+1}$, the characteristic roots are real and simple if the product $T_i\,\gamma_i>0$ for all $i=1,2,\dots n.$  Further, between any two 
characteristic roots of $L_{n+1}$ lies exactly one characteristic root of $L_n$. On the other hand, if $T_i\,\gamma_i<0,~ i=1,2,\dots,n,$ then all the real characteristic roots of $L_{n+1}$ lie between the least and the greatest of the main diagonal entries $\mathfrak s_j\equiv S_j+\varepsilon_{0,n}$. Further, if $\mathfrak s_0<\mathfrak s_1<\dots<\mathfrak s_{n+1}$, then the characteristic roots of $L_{n+1}$ and those of $L_{n}$ cannot interlace \cite{arscott,MM:HM,jay,gibson}. In short, the above discussion provide at our disposal a complete procedure to compute the $n^{th}$-degree polynomial solutions of the Generalized Confluent Heun Equation \eqref{eq3}.
\vskip0.1true in
\noindent Before closing this section, we report the first few polynomial solutions generated by the recurrence relation \eqref{eq8}: 
\vskip0.1true in
\noindent For the zero-degree polynomial solution $f_0(r)= 1$, $n=0,~j=0,1$ that gives  
the necessary and sufficient conditions, respectively: 
\begin{equation}\label{eq9}
\varepsilon_{1;0}=0,\qquad  \varepsilon_{0,0_1}=0,
\end{equation}
where the subindex $1$ in $\varepsilon_{0,0_1}$ refers to only one real root for $\Delta_{1}=0$.
\vskip0.1true in
\noindent For the first-degree polynomial solution $f_1(r)=\mathcal C_0+\mathcal C_1\,r$, $n=1$ and thus $j=0,1,2$. The coefficients $\mathcal C_j,j=0,1,2$ are evaluated using \eqref{eq8} that gives for $\mathcal C_0=1$ that $\mathcal C_1= \varepsilon_{0,1_i}/\beta_0$ and $\mathcal C_2= (\beta_0 \beta_2-\beta_1\varepsilon_{0,1}+\varepsilon_{0,1}^2)/(2 \beta_0(\alpha_1+\beta_0))$, whence
 \begin{equation}\label{eq10}
f_1(r)=1+\dfrac{\varepsilon_{0,1_i}}{\beta_0}\, r,\qquad i=1,2,
\end{equation}
subject to $\varepsilon_{1,1}=\beta_2$ and 
\begin{equation}\label{eq11}
\Delta_{2}=\left|\begin{array}{lll}
-\varepsilon_{0,1}&\beta_0\\
-\beta_2& \beta_1 -\varepsilon_{0,1}
\end{array}\right|\equiv \beta_0 \beta_2-\beta_1\varepsilon_{0,1_i}+\varepsilon_{0,1_i}^2=0,\qquad i=1,2.
\end{equation}
For the second-order polynomial solution $f_2(r)=\mathcal C_0+\mathcal C_1\,r+\mathcal C_2 \, r^2$, $n=2$ and thus $j=0,1,2,3$. The coefficients $\mathcal C_j,j=0,1,2,3$ are evaluated using \eqref{eq8} for $\mathcal C_0=1$ that $\mathcal C_1= \varepsilon_{0,2_i}/\beta_0$, 
$\mathcal C_2= (2\beta_0 \beta_2-\beta_1\varepsilon_{0,2_i}+\varepsilon_{0,2_i}^2)/(2 \beta_0(\alpha_1+\beta_0))$\\
and 
$$\mathcal C_3=\frac{{\varepsilon_{0,2_i}^3-(2 \alpha_2+3\beta_1)\varepsilon_{0,2_i}^2+2 (\beta_1 (\alpha_2+\beta_1)+\beta_2(\alpha_1+2 \beta_0)) \varepsilon_{0,2_i} -4(\alpha_2+\beta_1)\beta_0\beta_2}}{{6\beta_0(\alpha_1+\beta_0) (2\alpha_1+\beta_0)}},\qquad i=1,2,3$$ whence
 \begin{equation}\label{eq12}
f_2(r)=1+\dfrac{\varepsilon_{0,2_i}}{\beta_0}\, r +\frac{2 \beta_2\beta_0-\beta_1\varepsilon_{0,2_i}+\varepsilon_{0,2_i}^2}{2\beta_0(\beta_0+ a_1)}\,r^2,\qquad i=1,2,3,
\end{equation}
subject to $\varepsilon_1=2\beta_2$ and
\begin{align}\label{eq13}
\Delta_{3}&=\left|\begin{array}{lll}
-\varepsilon_{0,2_i}&\beta_0&0\\
-2\beta_2& \beta_1 -\varepsilon_{0,2_i}&2a_1+2\beta_0\\
0&-\beta_2&2a_2+2\beta_1-\varepsilon_{0,2_i}
\end{array}\right|\notag\\
&\equiv
\varepsilon_{0,2_i}^3 -(2 \alpha_2+3\beta_1)\varepsilon_{0,2_i}^2+2 (\beta_1 (\alpha_2+\beta_1)+\beta_2(\alpha_1+2 \beta_0)) \varepsilon_{0,2_i}-4 \beta_0\beta_2 (\alpha_2 + \beta_1) 
=0,\qquad i=1,2,3.
\end{align}
In summary, we have the following theorem..
\begin{theorem} \label{ThmII.1} (Polynomial solutions)
The necessary and sufficient conditions for $n$-degree polynomial solutions $f_n(r)=\sum_{j=0}^n C_j\,r^j$ of the differential equation \eqref{eq3}, respectively, are
\begin{equation}\label{eq14}
\varepsilon_{1;n}=n\, \beta_2,\qquad n=0,1,\dots,
\end{equation}
and the vanishing of the $\Delta_{n+1}$-determinant \\ \\
\begin{minipage}{0.5\textwidth} \centering
	\begin{center}
$\Delta_{n+1}$~~=~~\begin{tabular}{|lllllllll|}
 $S_0$ & $T_1$ &$~$&~& ~&~ &~& & \\
  $\gamma_1$ & $S_1$ &  $T_2$&$~$&~&~&~& & \\
~ & $\gamma_2$  & $S_2$&$T_3$&$~$&~&~& &\\
$~$&~&$\ddots$&$\ddots$&$\ddots$&~&~& &\\ 
~&~&~& & &$\gamma_{n-2}$&$S_{n-2}$&$T_{n-1}$&$~$ \\
~&~&~&&~& &$\gamma_{n-1}$&$S_{n-1}$&$T_n$\\
~&~&~&~&$~$& & &$\gamma_{n}$&$S_n$\\
\end{tabular}~~, \quad
\end{center}
\end{minipage} \quad
\begin{minipage}{0.47\textwidth}
\begin{align}
&for\notag\\
&S_j=j((j-1)\alpha_2+\beta_1)-\varepsilon_{0,2_i},\notag\\
& T_j=j((j-1)\alpha_1+\beta_0),\notag\\
 &\gamma_j=(j-1)\,\beta_2-\varepsilon_{1;2}.\label{eq15}
\end{align}
\end{minipage}
\end{theorem}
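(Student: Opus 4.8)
The plan is to substitute the polynomial ansatz $f_n(r)=\sum_{j=0}^n \mathcal{C}_j r^j$ directly into \eqref{eq3}, collect like powers of $r$, and extract two facts: the condition that annihilates the top power of $r$ (the necessary condition \eqref{eq14}), and the solvability condition for the resulting finite linear system (the vanishing of $\Delta_{n+1}$). Since matching coefficients has already produced the three-term recurrence \eqref{eq8}, most of the work reduces to careful bookkeeping of its boundary indices $j=n,n+1$ together with one linear-algebra identity.

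For necessity, I would impose $\mathcal{C}_{n+1}=\mathcal{C}_{n+2}=0$ in \eqref{eq8} and read off the equation $j=n+1$: the only surviving term is $[n\beta_2-\varepsilon_1]\,\mathcal{C}_n$, and a genuine degree-$n$ solution has $\mathcal{C}_n\neq 0$, which forces $\varepsilon_1=n\beta_2$, i.e. \eqref{eq14}. (Equivalently, the coefficient of the highest power $r^{n+1}$ in \eqref{eq3}, arising from $\beta_2 r^2 f'$ and $-\varepsilon_1 r f$, equals $(n\beta_2-\varepsilon_1)\mathcal{C}_n$.) It is worth recording here that \eqref{eq14} makes $\gamma_{n+1}=(n+1-1)\beta_2-n\beta_2=0$; hence once $\mathcal{C}_{n+1}$ vanishes, the recurrence propagates $\mathcal{C}_{n+2}=\mathcal{C}_{n+3}=\cdots=0$ automatically, so the series truly terminates and the solution is a polynomial of degree $\le n$.

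For sufficiency, I assume \eqref{eq14} holds and read \eqref{eq8} for $j=0,1,\dots,n$ (with the conventions $\mathcal{C}_{-1}=0$ and $\mathcal{C}_{n+1}=0$). These $n+1$ homogeneous linear equations in the $n+1$ unknowns $(\mathcal{C}_0,\dots,\mathcal{C}_n)$ have, as coefficient matrix, precisely the tridiagonal matrix of entries $S_j,T_j,\gamma_j$ displayed in \eqref{eq15}. A nonzero coefficient vector — which we must have, since $\mathcal{C}_0=1$ — exists if and only if its determinant $\Delta_{n+1}$ vanishes; this is exactly the advertised sufficient condition, and it simultaneously supplies the $\Delta_{n+1}=0$ half of necessity. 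The kernel is one-dimensional (fixed by $\mathcal{C}_0=1$) because the sub- and super-diagonal entries $\gamma_1,\dots,\gamma_n$ and $T_1,\dots,T_n$ are nonzero, so the first $n$ equations recover $\mathcal{C}_1,\dots,\mathcal{C}_n$ uniquely by forward substitution, the last equation $j=n$ being the single remaining constraint.

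To make the equivalence ``$\mathcal{C}_{n+1}=0 \Leftrightarrow \Delta_{n+1}=0$'' airtight — the one genuinely computational step — I would introduce the leading principal minors $D_k$ of the tridiagonal matrix, satisfying $D_0=1$, $D_1=S_0$, and $D_k=S_{k-1}D_{k-1}-T_{k-1}\gamma_{k-1}D_{k-2}$, and prove by induction that the forward-recurrence coefficients satisfy $\mathcal{C}_k=(-1)^k D_k/(T_1 T_2\cdots T_k)$. Feeding $j=n$ into \eqref{eq8} then yields $\mathcal{C}_{n+1}=(-1)^{n+1}\Delta_{n+1}/(T_1\cdots T_{n+1})$, so truncation of the series is equivalent to $\Delta_{n+1}=0$. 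The main obstacle is precisely this determinant identity together with the hypothesis $T_j=(j-1)\alpha_1+\beta_0\neq 0$ that the forward substitution tacitly requires; I would note that when some $T_j$ vanishes the corresponding equation decouples $\mathcal{C}_j$ from $\mathcal{C}_{j+1}$ and must be handled as a degenerate, lower-dimensional case, but that the determinantal criterion $\Delta_{n+1}=0$ still governs solvability since it is a polynomial identity requiring no division. The worked cases $n=0,1,2$ in \eqref{eq9}--\eqref{eq13} serve as consistency checks on the signs and the indexing.
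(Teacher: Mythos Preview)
Your proposal is correct and follows essentially the same route as the paper: both derive the three-term recurrence \eqref{eq8}, read the necessary condition $\varepsilon_1=n\beta_2$ from the top-degree term (the paper does this via the $n^{\text{th}}$ derivative \eqref{eq6}, you via the $j=n+1$ row of \eqref{eq8}, which the paper also notes ``replicates'' that condition), and then identify $\Delta_{n+1}=0$ as the solvability constraint for the resulting $(n+1)\times(n+1)$ tridiagonal homogeneous system. Your explicit minor identity $\mathcal{C}_k=(-1)^k D_k/(T_1\cdots T_k)$ and the discussion of the degenerate case $T_j=0$ add rigor the paper leaves implicit, but the underlying argument is the same.
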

\noindent Noting that, a simple relation to evaluate this determinant $\Delta_{n+1}=0$, in terms of the lower-degree determinants, is
\begin{align}\label{eq16}
\Delta_{-1}&=0, ~\Delta_0=1,\notag\\ 
\Delta_{j+1}&=(j((j-1)\alpha_2+\beta_1)-\varepsilon_{0})\,\Delta_j-j\beta_2(n-j+1)\,((j-1)\alpha_1+\beta_0)\,\Delta_{j-1},
\end{align}
for $j=1,2,\dots,n.$  This Theorem can be illustrated by the following diagram:
\vskip0.1true in

\begin{tikzpicture}[edge from parent/.style={draw,-latex}]
\node {$\overset{n=0}{(\varepsilon_{1;0})}$}
child { node {$\varepsilon_{0;0_1}$} 
child  { node {$f_{0_1}(r)$} }};
\hskip1.0true in
\node {$\overset{n=1}{(\varepsilon_{1;1})}$}
child { node {$\varepsilon_{0;1_1}$} child  { node {${f}_{1_1}(r)$} }}
child { node {$\varepsilon_{0;1_2}$} child  { node {${f}_{1_2}(r)$} }
  };
\hskip1.5true in
\node {$\overset{n=2}{(\varepsilon_{1;2})}$}
child { node {$\varepsilon_{0;2_1}$} child  { node {$f_{2_1}(r)$} }}
child { node {$\varepsilon_{0;2_2}$}child  { node {$f_{2_2}(r)$} }}
child { node {$\varepsilon_{0;2_3}$} child  { node {$f_{2_3}(r)$} }};  
\hskip2true in
\node {$\overset{n=3}{(\varepsilon_{1;3})}$}
child { node {$\varepsilon_{0;3_1}$} child  { node {$f_{3_1}(r)$} }}
child { node {$\varepsilon_{0;3_2}$}child  { node {$f_{3_2}(r)$} }}
child { node {$\varepsilon_{0;3_3}$} child  { node {$f_{3_3}(r)$} }} 
child { node {$\varepsilon_{0;3_4}$}child  { node {$f_{3_4}(r)$} }};
\hskip1.3true in
\node {$\cdots$}
child { node {$\cdots$} child {node {$\cdots$}}};
  \end{tikzpicture}
\vskip0.1true in
\section{Finite sequences of orthogonal polynomials: mathematical properties}
\noindent In the light of Theorem \ref{ThmII.1}, the polynomial solutions of the differential equation \eqref{eq3} can be written as
\begin{align}\label{eq17}
f_n(r)&=\sum\limits_{j=0}^n\dfrac{\mathpzc{P}_{j}^n(\varepsilon_0)}{j!\, a_1^j\,\left(\frac{\beta_0}{a_1}\right)_j}\,r^j,\quad n=0,1,2,\dots,
\end{align}
where $(\alpha)_n$ refers to the Pochhammer Symbol $(\alpha)_n=\alpha(\alpha+1)\dots(\alpha-n+1)$ defined in terms of Gamma Function as $(\alpha)_n=\Gamma(\alpha+n)/\Gamma(\alpha)$ and has the property $(-n)_k=0$ for any positive integers $k\geq n+1$.
\vskip0.1true in
\noindent  Denote $\varepsilon_0=\zeta$, for fixed $n$, each polynomial solution $f_n(r)$ \emph{generates} a finite sequence of the coefficients $\mathpzc{P}_{j}^n(\zeta),~j=0,\dots,n$ that can be evaluated independently using the three-term recurrence relation 
\begin{equation}\label{eq18}
\mathpzc{P}_{j+1}^n(\zeta)=\big[ \zeta-j\,(j-1)\,a_2-j\,\beta_1 \big] \, \mathpzc{P}_j^n(\zeta)+ \big[ j\,(n+1-j) \beta_2 \big( (j-1)a_1+\beta_0 \big) \big] \,\mathpzc{P}_{j-1}^n(\zeta) \, , \quad \mathpzc{P}_{-1}^n(\zeta)=0 \, , \quad \mathpzc{P}_{0}^n(\zeta)=1.
\end{equation}
From the classical theory of orthogonal polynomials, particularly Favard Theorem  \cite{Favard,chihara,ismail}, one confirms that $\{\mathpzc{P}_{k}^n(\zeta)\}_{k=0}^n$ is a finite sequence of orthogonal polynomials. In the remaining part of this section, we explore some mathematical properties of these families of orthogonal polynomials. \vskip0.1true in
\begin{theorem}  \label{ThmIII.1} (Christoffel-Darboux formula) For $\zeta_1\neq \zeta_2$,
\begin{align}\label{eq19}
\sum_{j=0}^m \frac{\mathpzc{P}_{j}^n(\zeta_1)\mathpzc{P}_j^n(\zeta_2)}{j!\,(\beta_2 a_1)^j\left(\frac{\beta_0}{a_1}\right)_j(-n)_j}
&= \frac{\mathpzc{P}_{m+1}^n(\zeta_1)\mathpzc{P}_{m}^n(\zeta_2)-\mathpzc{P}_{m}^n(\zeta_1) \mathpzc{P}_{m+1}^n(\zeta_2)}{m!\,(\beta_2 a_1)^m\left(\frac{\beta_0}{a_1}\right)_m(-n)_m (\zeta_1-\zeta_2)} \, , \qquad m=1,\dots, n-1 \, ,
\end{align}
while for $\zeta_2\rightarrow \zeta_1=\zeta$
\begin{align}\label{eq20}
\sum_{j=0}^m\frac{\big(\mathpzc{P}_{j}^n(\zeta)\big)^2}{j!\,(\beta_2 a_1)^j\left(\frac{\beta_0}{a_1}\right)_j(-n)_j}
&=\frac{[\mathpzc{P}_{k+1}^n(\zeta)]'\mathpzc{P}_{k}^n(\zeta)-[\mathpzc{P}_{k}^n(\zeta)]'\mathpzc{P}_{k+1}^n(\zeta)}{k!\,(\beta_2 a_1)^k\left(\frac{\beta_0}{a_1}\right)_k(-n)_k},\qquad m=1,\dots,n-1 \, .
\end{align}
\end{theorem}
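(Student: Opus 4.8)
The plan is to recognize \eqref{eq19} as the Christoffel--Darboux identity attached to the three-term recurrence \eqref{eq18} and to establish it by a telescoping argument, after which \eqref{eq20} follows by passing to the diagonal limit $\zeta_2\to\zeta_1$.

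First I would cast \eqref{eq18} in the canonical form $\mathpzc{P}_{j+1}^n(\zeta)=(\zeta-c_j)\,\mathpzc{P}_j^n(\zeta)-\lambda_j\,\mathpzc{P}_{j-1}^n(\zeta)$, reading off $c_j=j(j-1)a_2+j\beta_1$ and $\lambda_j=-\,j(n+1-j)\beta_2\big((j-1)a_1+\beta_0\big)$, and introduce the partial products $h_j=\prod_{i=1}^{j}\lambda_i$ with $h_0=1$. The decisive bookkeeping step is to evaluate this product explicitly: separating $\prod_{i=1}^j i=j!$, $\prod_{i=1}^j\big((i-1)a_1+\beta_0\big)=a_1^{\,j}\big(\tfrac{\beta_0}{a_1}\big)_j$, and $\prod_{i=1}^j (n+1-i)=(-1)^j(-n)_j$, and collecting the sign $(-1)^j$ produced by $\lambda_i=-d_i$, one finds that the two sign factors cancel and
\[
h_j=j!\,(\beta_2 a_1)^j\Big(\tfrac{\beta_0}{a_1}\Big)_j(-n)_j,
\]
which is precisely the denominator appearing in \eqref{eq19}. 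Since the statement is restricted to $m\le n-1$, every $(-n)_j$ with $0\le j\le m$ is nonzero (and $\beta_2,a_1$ together with the Pochhammer symbols are nonzero under the standing nondegeneracy assumptions already used in \eqref{eq17}), so each $h_j$ is well defined and may be divided by.

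Next I would set $p_j=\mathpzc{P}_j^n(\zeta_1)\mathpzc{P}_j^n(\zeta_2)$ and $A_j=\mathpzc{P}_{j+1}^n(\zeta_1)\mathpzc{P}_j^n(\zeta_2)-\mathpzc{P}_j^n(\zeta_1)\mathpzc{P}_{j+1}^n(\zeta_2)$, write the recurrence once at $\zeta_1$ and once at $\zeta_2$, multiply by $\mathpzc{P}_j^n(\zeta_2)$ and $\mathpzc{P}_j^n(\zeta_1)$ respectively, and subtract. The terms carrying $c_j$ cancel, while the two mixed $\mathpzc{P}_{j-1}^n$ products combine into $-A_{j-1}$, leaving the Wronskian-type recursion $A_j=(\zeta_1-\zeta_2)\,p_j+\lambda_j A_{j-1}$. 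Dividing by $h_j=\lambda_j h_{j-1}$ turns this into the exact difference $\tfrac{A_j}{h_j}-\tfrac{A_{j-1}}{h_{j-1}}=(\zeta_1-\zeta_2)\tfrac{p_j}{h_j}$, which telescopes over $j=1,\dots,m$. Using the initial data $\mathpzc{P}_0^n=1$ and $\mathpzc{P}_1^n(\zeta)=\zeta$, so that $A_0=\zeta_1-\zeta_2$ matches the $j=0$ term $(\zeta_1-\zeta_2)p_0/h_0$, the telescoped sum yields $\tfrac{A_m}{h_m}=(\zeta_1-\zeta_2)\sum_{j=0}^m p_j/h_j$, which is exactly \eqref{eq19}.

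Finally, \eqref{eq20} is obtained by letting $\zeta_2\to\zeta_1=\zeta$ in \eqref{eq19}. The numerator $\mathpzc{P}_{m+1}^n(\zeta_1)\mathpzc{P}_m^n(\zeta_2)-\mathpzc{P}_m^n(\zeta_1)\mathpzc{P}_{m+1}^n(\zeta_2)$ vanishes at $\zeta_2=\zeta_1$, so the right-hand side is a $0/0$ form; applying L'H\^opital's rule in $\zeta_2$ (equivalently, recognizing the difference quotient as a derivative) converts it into $[\mathpzc{P}_{m+1}^n]'(\zeta)\mathpzc{P}_m^n(\zeta)-[\mathpzc{P}_m^n]'(\zeta)\mathpzc{P}_{m+1}^n(\zeta)$ divided by $h_m$, i.e.\ \eqref{eq20} with $k=m$. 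I expect the only genuine obstacle to be the explicit evaluation of $h_j$ and the careful tracking of the two independent sign factors so that the product matches the stated denominator verbatim; the remainder is the standard cross-multiply-and-telescope manipulation.
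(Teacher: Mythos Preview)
Your proposal is correct and follows essentially the same route as the paper: write the recurrence \eqref{eq18} at $\zeta_1$ and $\zeta_2$, cross-multiply and subtract to obtain a Wronskian-type relation, telescope after dividing by the product of the $\lambda_j$, identify that product as $j!\,(\beta_2 a_1)^j\big(\tfrac{\beta_0}{a_1}\big)_j(-n)_j$, and finally pass to the confluent limit. The only cosmetic difference is that you divide by $h_j$ before summing while the paper substitutes recursively and divides at the end; your explicit tracking of the two sign factors is in fact cleaner than the paper's presentation.
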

\begin{proof} The recurrence relations \eqref{eq18} for $\zeta_1$ and $\zeta_2$ read:
\begin{align*}
\mathpzc{P}_{k+1}^n(\zeta_1)&=\zeta_1\mathpzc{P}_k^n(\zeta_1)- \big[ k(k-1)a_2+k\beta_1 \big] \mathpzc{P}_k^n(\zeta_1)- \big[ k\,(k-n-1) \beta_2 \big( (k-1)a_1+\beta_0 \big) \big] \mathpzc{P}_{k-1}^n(\zeta_1)\, , \quad {(i)} \\
\mathpzc{P}_{k+1}^n(\zeta_2)&=\zeta_2\mathpzc{P}_k^n(\zeta_2)- \big[ k(k-1)a_2+k\beta_1 \big] \mathpzc{P}_k^n
(\zeta_2)- \big[ k\,(k-n-1) \beta_2 \big( (k-1)a_1+\beta_0 \big) \big] \mathpzc{P}_{k-1}^n(\zeta_2) \, , \quad { (ii)} 
\end{align*}
respectively. Multiplying (i) by $\mathpzc{P}_{k}^n(\zeta_2)$ and  (ii) by $\mathpzc{P}_{k}^n(\zeta_1)$ then subtracting the latter from the prior yields:
\begin{align*}
(\zeta_1-\zeta_2)\mathpzc{P}_{k}^n(\zeta_2)
\mathpzc{P}_k^n(\zeta_1)=Q_{k+1}(\zeta_1,\zeta_2)-\lambda_{k+1}Q_{k}(\zeta_1,\zeta_2),
\end{align*}
where 
\begin{align*}
\lambda_{k+1}&=k\,(k-n-1) \beta_2 \big( (k-1)a_1+\beta_0 \big) \, , \quad \text{and} \quad Q_{k+1}(\zeta,\zeta')=\mathpzc{P}_{k+1}^n(\zeta_1)\mathpzc{P}_{k}^n(\zeta_2)-\mathpzc{P}_{k}^n(\zeta_1)\mathpzc{P}_{k+1}^n(\zeta_2) \, .
\end{align*}
Recursively over $k$, we have
\begin{align*}
(\zeta_1-\zeta_2)\mathpzc{P}_{k}^n(\zeta_1)\mathpzc{P}_k^n(\zeta_2)&=Q_{k+1}(\zeta_1,\zeta_2)-\lambda_{k+1}
Q_{k}(\zeta_1,\zeta_2)\\
(\zeta_1-\zeta_2)\mathpzc{P}_{k-1}^n(\zeta_1)\mathpzc{P}_{k-1}^n(\zeta_2)&=Q_{k}(\zeta_1,\zeta_2)-\lambda_{k}
Q_{k-1}(\zeta_1,\zeta_2)\\
\dots&=\dots\\
(\zeta_1-\zeta_2)\mathpzc{P}_0^n(\zeta_1)\mathpzc{P}_0^n(\zeta_2)&=Q_1(\zeta_1,\zeta_2) \, .
\end{align*}
From which, it is straightforward to obtain
\begin{align*}
(\zeta_1-\zeta_2)&\bigg[\mathpzc{P}_{k}^n(\zeta_1)\mathpzc{P}_k^n(\zeta_2)+
\lambda_{k+1}\mathpzc{P}_{k-1}^n(\zeta_1)\mathpzc{P}_{k-1}^n(\zeta_2)+
\lambda_{k+1}\lambda_{k}\mathpzc{P}_{k-2}^n(\zeta_1)\mathpzc{P}_{k-2}^n(\zeta_2)+
\lambda_{k+1}\lambda_{k}\lambda_{k-1}\mathpzc{P}_{k-3}^n(\zeta_1)\mathpzc{P}_{k-3}^n(\zeta_2)
+\dots\notag\\
&+
\lambda_{k+1}\lambda_{k}\lambda_{k-1}\lambda_{k-2}\dots \lambda_2\mathpzc{P}_0^n(\zeta_1)\mathpzc{P}_0^n(\zeta_2)\bigg]=Q_{k+1}(\zeta_1,\zeta_2) \, .
\end{align*}
We then divide both sides of this equation by $(\zeta_1-\zeta_2)\lambda_{k+1}\lambda_{k}\lambda_{k-1}\lambda_{k-2}\dots \lambda_2$, and sum over $k$ to obtain
\begin{align*}
\sum_{j=0}^k\frac{\mathpzc{P}_{j}^n(\zeta_1)\mathpzc{P}_j^n(\zeta_2)}{\lambda_{j+1}\lambda_{j}\lambda_{j-1}\dots \lambda_2}
&=(\lambda_{k+1}\lambda_{k}\lambda_{k-1}\dots\lambda_2 )^{-1}\frac{\mathpzc{P}_{k+1}^n(\zeta_1)\mathpzc{P}_{k}^n(\zeta_2)-\mathpzc{P}_{k}^n(\zeta_1)
\mathpzc{P}_{k+1}^n(\zeta_2)}{\zeta_1-\zeta_2} \, .
\end{align*}
The \eqref{eq19} follows, by noting that 
\begin{equation*}
\lambda_{k+1}\lambda_{k}\lambda_{k-1}\dots\lambda_2=\prod_{j=2}^{k+1}\lambda_j=k!\,
(\beta_2 a_1)^k\left(\frac{\beta_0}{a_1}\right)_k(-n)_k \, , \qquad k=1,\dots, n \, ,
\end{equation*}
and \eqref{eq20} follows by taking the limit of both sides of \eqref{eq19} as $\zeta_2\rightarrow \zeta_1=\zeta$.
\end{proof}
\noindent Another interesting property of the finite sequences $\{\mathpzc{P}_{k}^n(\zeta)\}_{k=0}^n$ deals with 
the normalized weight function $\mathpzc{W}(\xi)$,
\begin{equation}\label{eq21}
\int_{\mathpzc{S}} \mathpzc{W}(\xi)d\xi=1 \, ,
\end{equation}
for which the orthogonality  
\begin{equation}\label{eq22}
\int_{\mathcal S}  \mathpzc{P}_{k}^n(\zeta)\,\mathpzc{P}_{k'}^n(\zeta)\,\mathpzc{W}(\zeta)\,d\zeta=\eta_k^2\,\delta_{kk'} \, , \qquad k,k'=0,1,\dots,n \, ,
\end{equation}
is defined. Here, $\mathpzc{S}$ is the support of the measure $\mathpzc{W}(\zeta)\,d\zeta$ on the real line and $\delta_{kk'}$ is the classical Kronecker Delta Function $\delta_{kk}=1$, while for $k\neq k'$, $\delta_{kk'}\neq 0$.  
\begin{theorem} \label{ThmIII.2} For the monic orthogonal polynomials $\{ \mathpzc{P}_{k}^n(\zeta) \}_{k=0}^n$
\begin{equation}\label{eq23}
\int_{\mathcal S}  \zeta^{j} \mathpzc{P}_{k}^n(\zeta)\,\mathpzc{W}(\zeta)\,d\zeta= \eta_k^2\,\delta_{kj} \, , \qquad for~all\quad j\leq k \, .
\end{equation}
\end{theorem}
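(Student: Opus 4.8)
The plan is to reduce the claim to the basic orthogonality relation \eqref{eq22} by a change-of-basis argument. First I would note that the recurrence \eqref{eq18} shows, by induction on $k$, that each $\mathpzc{P}_k^n(\zeta)$ is a \emph{monic} polynomial of exact degree $k$ in $\zeta$: the base case is $\mathpzc{P}_0^n(\zeta)=1$, and in the inductive step the factor $[\zeta-j(j-1)a_2-j\beta_1]$ multiplying $\mathpzc{P}_j^n(\zeta)$ raises the degree by one while preserving the leading coefficient $1$, the trailing term $[j(n+1-j)\beta_2((j-1)a_1+\beta_0)]\mathpzc{P}_{j-1}^n(\zeta)$ being of strictly lower degree. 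This confirms that the hypothesis ``monic'' is consistent with the construction and pins down the degree count I need below.

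The key step is the observation that the family $\{\mathpzc{P}_0^n,\mathpzc{P}_1^n,\dots,\mathpzc{P}_j^n\}$, consisting of monic polynomials of the distinct degrees $0,1,\dots,j$, is linearly independent and therefore forms a basis of the space of polynomials of degree at most $j$. Consequently, for any $j\le k$ I would expand the monomial in this basis as $\zeta^{j}=\sum_{i=0}^{j} c_i\,\mathpzc{P}_i^n(\zeta)$, where comparing leading coefficients forces $c_j=1$ (both $\zeta^j$ and $\mathpzc{P}_j^n$ being monic of degree $j$), while the lower $c_i$ are some constants whose explicit values are irrelevant.

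I would then substitute this expansion into the integral and interchange summation and integration, giving $\int_{\mathcal S}\zeta^{j}\,\mathpzc{P}_k^n(\zeta)\,\mathpzc{W}(\zeta)\,d\zeta=\sum_{i=0}^{j}c_i\int_{\mathcal S}\mathpzc{P}_i^n(\zeta)\,\mathpzc{P}_k^n(\zeta)\,\mathpzc{W}(\zeta)\,d\zeta=\sum_{i=0}^{j}c_i\,\eta_k^2\,\delta_{ik}$ by the orthogonality relation \eqref{eq22}; here every index is in the admissible range $\{0,\dots,n\}$ since $i\le j\le k\le n$. When $j<k$, each index satisfies $i\le j<k$, so every Kronecker delta vanishes and the integral is zero; when $j=k$, the single surviving term is $i=k$ with coefficient $c_k=1$, yielding $\eta_k^2$. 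Both cases are captured by the factor $\eta_k^2\,\delta_{kj}$, which is precisely \eqref{eq23}.

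I do not anticipate a genuine obstacle, since this is the standard ``annihilation of lower degrees'' property of orthogonal polynomials. The only point demanding care is the monic normalization: it is what forces $c_j=1$ and thereby fixes the diagonal value as exactly $\eta_k^2$, rather than leaving it as an undetermined constant multiple.
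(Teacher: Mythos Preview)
Your proof is correct and rests on the same change-of-basis idea as the paper's. The paper expands each $\mathpzc{P}_{k'}^n$ in monomials and then inverts the resulting lower-triangular system with ones on the diagonal, whereas you go in the opposite direction, expanding $\zeta^{j}$ in the basis $\{\mathpzc{P}_i^n\}$; these are dual formulations of the same argument, yours being the more direct of the two.
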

\begin{proof}
By writing $\mathpzc{P}_{k'}^n(\zeta)=\sum_{j=0}^{k'} \alpha_{k';j}\zeta^j$, where $\alpha_{k';k'}=1$, equation \eqref{eq22} yields
\begin{equation}\label{eq24}
\sum_{j=0}^{k'}\alpha_{k';j} \int_{\mathcal S} \zeta^j \mathpzc{P}_{k}^n(\zeta)\,\mathpzc{W}(\zeta)\,d\zeta=\eta_k\eta_{k'}\delta_{kk'} \, , \qquad k,k'=0,1,\dots,n \, , \qquad a_{k';k'}=1 \, .
\end{equation}
For $k'=0,1,\dots,n$, \eqref{eq24} generates a linear system that yields
\begin{align}\label{eq25}
\begin{pmatrix}
\int_{\mathcal S}  \mathpzc{P}_{k}^n(\zeta)\mathpzc{W}(\zeta)d\zeta \hfill \\ 
\int_{\mathcal S}  \zeta\mathpzc{P}_{k}^n(\zeta)\mathpzc{W}(\zeta)d\zeta \hfill \\ 
\int_{\mathcal S}  \zeta^2\mathpzc{P}_{k}^n(\zeta)\mathpzc{W}(\zeta)d\zeta\\ 
\int_{\mathcal S}  \zeta^3\mathpzc{P}_{k}^n(\zeta)\mathpzc{W}(\zeta)d\zeta\\
\vdots\\
\int_{\mathpzc{S}}  \zeta^{k'}\mathpzc{P}_{k}^n(\zeta)\mathpzc{W}(\zeta)d\zeta
\end{pmatrix}=\eta_k\begin{pmatrix}
1&0&0&0&\dots& 0\\
\alpha_{1;0}^n& 1&0&0&\dots&0\\ 
\alpha_{2;0}^n&\alpha_{2;1}^n&1&0&\dots&0\\ 
\alpha_{3;0}^n&\alpha_{3;1}^n&\alpha_{3;2}^n&1&\dots&0\\ 
\alpha_{4;0}^n&\alpha_{4;1}^n&\alpha_{4,2}^n&\alpha_{4,3}^n&\dots&0\\ 
\vdots&\vdots&\vdots&\vdots&\dots&\vdots\\ 
\alpha_{k';0}^n&\alpha_{k';1}^n&\alpha_{k';2}^n&\alpha_{k';3}^n&\dots&1
\end{pmatrix}_{(k'+1) \times (k'+1)}^{-1} \hspace*{-0.6cm} \times \hspace*{0.4cm} \begin{pmatrix}
\eta_0\delta_{k0}\\ 
\eta_1\delta_{k1}\\ 
\eta_2\delta_{k2}\\ 
\eta_3\delta_{k3}\\ 
\vdots\\
\eta_{k'}\delta_{kk'}
\end{pmatrix}_{(k'+1)\times 1} \, .
\end{align}
The inverse of the lower triangular matrix is, again, lower triangular with ones on the main diagonal. This argument establishes our assertion \eqref{eq22}. 
\end{proof}
\vskip0.1true in
\begin{theorem}\label{ThmIII.3}
The norms of all polynomials
$\mathpzc{P}_{k}^n(\xi)$ with $k\geq n+1$ vanish.
\end{theorem}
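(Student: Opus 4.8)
The plan is to reduce the statement to the single elementary fact, recorded right after \eqref{eq17}, that the Pochhammer symbol $(-n)_k$ vanishes for every $k\ge n+1$. The link between the norms and this symbol is that each $\mathpzc{P}_k^n(\zeta)$ is \emph{monic} of degree $k$ in $\zeta$ (read off \eqref{eq18}: the factor multiplying $\mathpzc{P}_j^n$ is $\zeta$ together with lower-order terms, so the leading coefficient propagates unchanged), and for a monic orthogonal family the squared norm is completely determined by the product of the subdiagonal recurrence coefficients.

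Concretely, I would first isolate from \eqref{eq18} the coefficient $\mu_j=j\,(n+1-j)\,\beta_2\big((j-1)a_1+\beta_0\big)$ that multiplies $\mathpzc{P}_{j-1}^n$. Pairing $\zeta\,\mathpzc{P}_k^n$ against $\mathpzc{P}_{k-1}^n$ under the bilinear form defined by $\mathpzc{W}$ and using the orthogonality \eqref{eq22} yields the standard norm recursion $\eta_k^2=-\mu_k\,\eta_{k-1}^2$. Since $\mathpzc{P}_0^n=1$ and $\int_{\mathcal S}\mathpzc{W}=1$, iteration gives
\[
\eta_k^2=\prod_{j=1}^k(-\mu_j)=k!\,(\beta_2 a_1)^k\left(\frac{\beta_0}{a_1}\right)_k(-n)_k,
\]
which is exactly the normalising product already evaluated in the proof of Theorem \ref{ThmIII.1} and appearing as the denominator in \eqref{eq19}. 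The conclusion is then immediate: for every $k\ge n+1$ the factor $(-n)_k$ vanishes, whence $\eta_k^2=0$. Equivalently, one checks directly that $\mu_{n+1}=(n+1)\,(n+1-(n+1))\,\beta_2\,(na_1+\beta_0)=0$, so the recursion already kills $\eta_{n+1}^2=-\mu_{n+1}\eta_n^2=0$, and the vanishing persists for larger $k$ because $\mu_{n+1}$ then occurs as a factor in every subsequent product.

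The one point demanding care, and which I expect to be the real obstacle, is that the norm recursion (hence the product formula) was derived from the orthogonality \eqref{eq22}, which is asserted only for indices $k,k'\le n$; I must justify it for the \emph{extended} polynomials $k\ge n+1$. I would close this gap through the finite Favard structure underlying Theorem \ref{ThmII.1}: the measure $\mathpzc{W}$ is the discrete one carried by the $n+1$ roots $\varepsilon_{0,n_i}$ of the secular equation $\mathpzc{P}_{n+1}^n(\varepsilon_0)=0$ (equivalently $\Delta_{n+1}=0$), i.e. the eigenvalues of the tridiagonal matrix $L_{n+1}$. Because $\mu_{n+1}=0$, the recurrence \eqref{eq18} decouples at the step $j=n+1$, so $\mathpzc{P}_{n+1}^n$ divides $\mathpzc{P}_k^n$ for every $k\ge n+1$; each such $\mathpzc{P}_k^n$ therefore vanishes identically on the entire support $\mathcal S$, and $\eta_k^2=\int_{\mathcal S}\big(\mathpzc{P}_k^n\big)^2\mathpzc{W}=0$ follows without any appeal to orthogonality beyond the given range. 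This node-polynomial argument both validates the product formula for all $k$ and furnishes the conceptual reason why the sequence is genuinely finite.
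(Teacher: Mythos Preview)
Your argument is correct and essentially identical to the paper's: both multiply the three-term recurrence \eqref{eq18} by $\zeta^{k-1}\mathpzc{W}$ (equivalently, pair $\zeta\,\mathpzc{P}_k^n$ against $\mathpzc{P}_{k-1}^n$), obtain the two-term norm recursion $\mathcal G_k = k(k-1-n)\beta_2\big((k-1)a_1+\beta_0\big)\mathcal G_{k-1}$, solve it to the closed form $k!\,(\beta_2 a_1)^k(-n)_k(\beta_0/a_1)_k$, and invoke $(-n)_k=0$ for $k\ge n+1$. Your final paragraph, justifying the step beyond $k\le n$ via the discrete support of $\mathpzc{W}$ and the divisibility $\mathpzc{P}_{n+1}^n\mid\mathpzc{P}_k^n$, is a genuine addition: the paper simply applies Theorem~\ref{ThmIII.2} without commenting on the fact that the underlying orthogonality \eqref{eq22} was only asserted for indices up to $n$, so your extra care closes a gap the paper leaves implicit.
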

\begin{proof}
\noindent The norms $\eta_k$ can be evaluated using the recurrence relations \eqref{eq18} after multiplying throughout by $\zeta^{k-1}\mathpzc{W}(\zeta)$ and integrating over $\zeta$. Then by Theorem \ref{ThmIII.2}, a simple two-term recursion relation for the squared norm is obtained
\begin{equation*}
\mathcal G_k=k\,(k-1-n)\beta_2 \big( (k-1)a_1+\beta_0 \big)\mathcal G_{k-1} \, ,
\end{equation*}
with a solution given by
\begin{equation}\label{eq26}
\mathcal G_k=|\eta_k|^2=\mathlarger{ \prod}_{j=1}^k \big[j\,(j-1-n)\beta_2 \big( (j-1)a_1+\beta_0 \big) \big]=k!\, (\beta_2a_1)^k (-n)_k\, \left(\frac{\beta_0}{a_1}\right)_k \, ,
\end{equation}
which is equal to zero for $k=n+1,n+2,\dots$, because of the Pochhammer identity $(-n)_k=0$ for all $k\ge n+1$.
\end{proof}
\noindent From Theorems \ref{ThmIII.2} and \ref{ThmIII.3}, it follows that
\begin{align}
&\int_{\mathcal S}  \mathpzc{P}_{k}^n(\zeta)\,\mathpzc{W}(\zeta)\,d\zeta= \delta_{k0}=0 \, , \quad for~all ~~k\geq 1 \, , \label{eq27}\\ 
&\int_{\mathcal S}  \mathpzc{P}_{j}^n(\zeta)\mathpzc{P}_{k}^n(\zeta)\,\mathpzc{W}(\zeta)\,d\zeta= 0 \, ,\quad (0\leq j<k\leq n) \, ,\label{eq28}\\
 &\int_{\mathcal S} \left[\mathpzc{P}_{k}^n(\zeta)\right]^2\,\mathpzc{W}(\zeta)\,d\zeta=k!\, (\beta_2a_1)^k (-n)_k\, \left(\frac{\beta_0}{a_1}\right)_k \, ,\quad 0\leq k\leq n \, ,\label{eq29}\\
&\int_{\mathcal S}  \zeta\, \mathpzc{P}_{k}^n(\zeta)\mathpzc{P}_{k-1}^n(\zeta)\,\mathpzc{W}(\zeta)\,d\zeta=k!\, (\beta_2a_1)^k (-n)_k\, \left(\frac{\beta_0}{a_1}\right)_k \, ,\qquad k=1,2,\dots \, ,\label{eq30}\\
&\int_{\mathcal S}  \zeta\, [\mathpzc{P}_{k}^n(\zeta)]^2\,\mathpzc{W}(\zeta)\,d\zeta=k \big( (k-1)a_2+\beta_1 \big)k!\, (\beta_2a_1)^k (-n)_k\, \left(\frac{\beta_0}{a_1}\right)_k \, ,\qquad k=0,1,2,\dots \, ,\label{eq31}\\
&\int_{\mathcal S}\zeta^2[\mathpzc{P}_{k}^n(\zeta)]^2W(\zeta)d\zeta=\bigg[
(k+1)\, \beta_2(k-n)\, \left(\beta_0+k\,a_1\right)+k\,(k-n-1) \beta_2 \big( (k-1)a_1+\beta_0 \big)\notag\\
&\hskip1.37true in+ \big( k(k-1)a_2+k\beta_1 \big)^2\, \bigg] k!\, (\beta_2a_1)^{k} (-n)_k\, \left(\frac{\beta_0}{a_1}\right)_k \, .\label{eq32}
\end{align}
\vskip0.1true in
\noindent Although we don't have a general formula for arbitrary $k$ that evaluates the moments 
\begin{align}\label{eq33}
\mu_k =\int_{\mathcal S}\zeta^k\, \mathpzc{W}(\zeta)\, d\zeta,\quad k=0,1,\dots,n \, ,
\end{align}
it is possible to compute $\mu_k$ recursively using Theorem \ref{ThmIII.2} and the recurrence relation \eqref{eq18}. It is not difficult to show that, for $\beta_2\,\beta_0<0$,
\begin{align}\label{eq34}
\int_{\mathcal S}\mathpzc{W}(\zeta)\,d\zeta=1 \, , \quad\int_{\mathcal S}\zeta \,\mathpzc{W}(\zeta)\,d\zeta=0 \, , \quad \int_{\mathcal S}\zeta^2\,\mathpzc{W}(\zeta)\,d\zeta=-n\,\beta_0\beta_2 \, , \quad \int_{\mathcal S} \zeta^3  \mathpzc{W}(\zeta)\,d\zeta=-n\,\beta_0\beta_1\beta_2 \, .
\end{align}
and for all $k\geq 2$,
\begin{equation}\label{eq35}
\int_{\mathcal S}\zeta\, \mathpzc{P}_k^n(\zeta)\, \mathpzc{W}(\zeta)\,d\zeta=0 \, .
\end{equation}
For example,
\begin{align}\label{eq36}
\int_{\mathcal S}\zeta^3\,  \mathpzc{W}(\zeta)\,d\zeta=\beta_1\eta_1^2 \, ,\qquad 
\int_{\mathcal S}\zeta^4\,  \mathpzc{W}(\zeta)\, d\zeta=n\,\beta_0\, \beta_2\, \big(2\, \alpha_1 \beta_2 (n-1)
 + \beta_0\, \beta_2\, (3n-2 )-\beta_1 \big) \, .
\end{align}
From this discussion, the recurrence relation \eqref{eq19} can be rewritten as
\begin{align}\label{eq37}
\mathpzc{P}_{-1}^n(\zeta)&=0 \, , \qquad \mathpzc{P}_{0}^n(\zeta)=1 \, , \qquad \mathpzc{P}_{1}^n(\zeta)=\zeta \, , \notag\\
\mathpzc{P}_{k+1}^n(\zeta)&=\left(\zeta-\dfrac{\int_{\mathcal S}\zeta[\mathpzc{P}_{k}^n(\zeta)]^2\,W(\zeta)\,d\zeta}{\int_{\mathcal S}[\mathpzc{P}_{k}^n(\zeta)]^2\,W(\zeta)\,d\zeta}\right)\mathpzc{P}_k^n(\zeta)-\left(\dfrac{\int_{\mathcal S}[\mathpzc{P}_{k}^n(\zeta)]^2\,W(\zeta)\,d\zeta}{\int_{\mathcal S}[\mathpzc{P}_{k-1}^n(\zeta)]^2\,W(\zeta)\,d\zeta}\right)\mathpzc{P}_{k-1}^n(\zeta)\, , \quad k=1,\dots, n-1 \, .
\end{align}
\vskip0.1true in
\noindent Another interesting property of the polynomials $\{\mathpzc{P}_k^n(\zeta)\}_{k=0}^n$, besides being an orthogonal sequence, is that when the parameter $n$ takes positive integer values the polynomials exhibit a
factorization property similar to the factorization of the Bender-Dunne orthogonal polynomials studied earlier \cite{bender1996, saad2006}. The factorization property occurs because the third-term
in the recursion relation \eqref{eq18} vanishes when $j=n+1$, so that all subsequent polynomials have a common factor, $\mathpzc{P}_{n+1}^n(\zeta)$, called a \emph{critical polynomial}. To illustrate this factorization property, consider the case of $n=1$:
\begin{align*}
\mathpzc{P}_1^1(\zeta)=\zeta \, ,\quad
\mathpzc{P}_2^1(\zeta)&=\zeta^2-\beta_1\zeta +\beta_2 \beta_0 \, ,\\  
\mathpzc{P}_3^1(\zeta)&=(\zeta-2 \beta_1-2a_2) \mathpzc{P}_2^1(\zeta) \, ,\\
\mathpzc{P}_4^1(\zeta)&=\left(\zeta^2-(5 \beta_1+8 a_2)\zeta+6 \beta_1^2-3\beta_2 \beta_0+18\beta_1 a_2+12 a_2^2-6 \beta_2a_1\right)\mathpzc{P}_2^1(\zeta) \, .
\end{align*}
Here, the critical polynomial $P_2^1(\zeta)$ is a factor of every other polynomial $P_{k+2}^n(\zeta), k=1,2,\dots$. For $n=2$: 
\begin{align*}
\mathpzc{P}_1^2(\zeta)=\zeta \, ,\quad 
\mathpzc{P}_2^2(\zeta)&=\zeta^2-\beta_1\zeta +2\beta_2 \beta_0 \, ,\\
\mathpzc{P}_3^2(\zeta)&=-4 \beta_2 \beta_1\beta_0-4\beta_2\beta_0a_2+\left(2\beta_1^2+4\beta_2 \beta_0+2 \beta_1a_2+2\beta_2a_1\right)\zeta-(3 \beta_1+2a_2) \zeta^2+\zeta^3 \, ,\\ 
\mathpzc{P}_4^2(\zeta)&=(\zeta-3\beta_1-6a_2)\mathpzc{P}_3^2(\zeta) \, ,\\ 
\mathpzc{P}_5^2(\zeta)&= \big( 12 \beta_1^2 - 4 \beta_2 \beta_0 + 60 \beta_1 a_2 + 72 a_2^2 - 
 12 \beta_2 a_1 - (7 \beta_1+18 a_2)\zeta + \zeta^2 \big) \mathpzc{P}_3^2(\zeta) \, .
\end{align*}
which shows that $P_3^n(\zeta)$,  the critical polynomial, factorizes each of the next sequence $P_{k+3}^n(\zeta), k=1,2,\dots$.  
\vskip0.1true in
\noindent Generally, the polynomials $\mathpzc{P}_{k+n+1}^n(\zeta)$ beyond some critical polynomial $\mathpzc{P}_{n+1}^n(\zeta)$ are factored into the product 
\begin{align}\label{eq38}
\mathpzc{P}_{k+n+1}^n(\zeta)=\mathcal Q_k^n(\zeta)\,\mathpzc{P}_{n+1}^n(\zeta),\qquad k=0,1,\dots. 
\end{align}
The quotient polynomials $\{\mathcal Q_k^n(\zeta)\}_{k\geq 0}$  are generated using the following three-term recurrence relation
\begin{equation}\label{eq39}
\mathcal Q_k^n(\zeta)=\left[\zeta- (k+n)\big( (k+n-1)a_2+\beta_1\big)\right]\mathcal Q_{k-1}^n(\zeta)-\,\left[ (k-1)(k+n) \beta_2 \big( (k+n-1)a_1+\beta_0 \big) \right] \mathcal Q_{k-2}^{n}(\zeta) \, , \quad k\geq 1 \, ,
\end{equation}
initiated with $\mathcal Q_{-1}^n(\zeta)=0$, and $\mathcal Q_{0}^n(\zeta)=1$.
This recurrence relation follows by substituting \eqref{eq38} into \eqref{eq18} after re-indexing the polynomials to eliminate the common factor $\mathpzc{P}_{n+1}^n(\zeta)$ from both sides. Hence, the quotient polynomials $\mathcal Q_k^n(\zeta)$ form an infinite sequence of orthogonal polynomials for each value of $n$. The Christoffel-Darboux Formula for this sequence of orthogonal polynomials reads
\begin{align}\label{eq40}
\sum_{j=0}^k\frac{{\mathcal Q}_j^n(\zeta_1){\mathcal Q}_j^n(\zeta_2)}{j!(n+2)_j(\beta_2a_1)^j\left(n+1+\frac{\beta_0}{a_1}\right)_j}=\frac{{\mathcal Q}_{k+1}^n(\zeta_1){\mathcal Q}_{k}^n(\zeta_2)-{\mathcal Q}_{k}^{n}(\zeta_1){\mathcal Q}_{k+1}^n(\zeta_2)}{k!(n+2)_k(\beta_2a_1)^k\left(n+1+\frac{\beta_0}{a_1}\right)_k(\zeta_1-\zeta_2)} \, ,
\end{align}
and in the limit $\zeta_1\to \zeta_2=\zeta$,
\begin{align}\label{eq41}
\sum_{j=0}^k\frac{\left({\mathcal Q}_j^n(\zeta)\right)^2}{j!(n+2)_j(\beta_2a_1)^j\left(n+1+\frac{\beta_0}{a_1}\right)_j}=\frac{[{\mathcal Q}_{k+1}^n(\zeta)]'{\mathcal Q}_{k}^n(\zeta)-[{\mathcal Q}_{k}^{n}(\zeta)]'{\mathcal Q}_{k+1}^n(\zeta)}{k!(n+2)_k(\beta_2a_1)^k\left(n+1+\frac{\beta_0}{a_1}\right)_k} \, .
\end{align}
\begin{theorem}
The norms of all the polynomials 
$\mathpzc{Q}_{k}^n(\xi)$ are given by
\begin{equation}\label{eq42}
\mathcal G_k^{\mathcal Q}=k!\,(n+2)_k\,(\beta_2a_1)^k\left(n+1+\frac{\beta_0}{a_1}\right)_k.
\end{equation}
\end{theorem}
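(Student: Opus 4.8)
The plan is to mirror the argument used for Theorem~\ref{ThmIII.3}, with the terminating family $\{\mathpzc{P}_k^n\}$ replaced by the infinite orthogonal family $\{\mathcal Q_k^n\}$ governed by the three-term recurrence \eqref{eq39}. First I would record that, by an immediate induction on \eqref{eq39}, each $\mathcal Q_k^n$ is monic of exact degree $k$; hence by Favard's Theorem \cite{chihara,ismail} the sequence is orthogonal with respect to a moment functional, equivalently a weight $\mathcal W^{\mathcal Q}(\zeta)\,d\zeta$ supported on some set $\mathcal S$, provided the off-diagonal coefficient obtained from \eqref{eq39} after re-indexing,
\[
\mu_k := k\,(k+n+1)\,\beta_2\big((k+n)a_1+\beta_0\big),
\]
is nonzero for every $k\ge 1$. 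This amounts to $\beta_2\neq 0$ together with $\beta_0/a_1\notin\{-(n+1),-(n+2),\dots\}$, which I would note as the standing hypothesis.

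Next I would establish the exact analogue of Theorem~\ref{ThmIII.2} for this family. Writing $\mathcal Q_{k'}^n(\zeta)=\sum_{j\le k'}\alpha_{k';j}\,\zeta^j$ with $\alpha_{k';k'}=1$ and inverting the resulting unit-lower-triangular system precisely as in \eqref{eq25}, one obtains
\[
\int_{\mathcal S}\zeta^{\,j}\,\mathcal Q_k^n(\zeta)\,\mathcal W^{\mathcal Q}(\zeta)\,d\zeta=\mathcal G_k^{\mathcal Q}\,\delta_{jk},\qquad j\le k.
\]
With this orthogonality-of-powers relation in hand the norm recursion drops out directly: re-indexing \eqref{eq39} to express $\mathcal Q_{k+1}^n$, multiplying through by $\zeta^{k-1}\mathcal W^{\mathcal Q}(\zeta)$ and integrating over $\mathcal S$, the left-hand integral and the diagonal ($b_k$) term both vanish because they pair a power $\zeta^{\,\le k-1}$ against $\mathcal Q_{k+1}^n$ or $\mathcal Q_k^n$, leaving the two-term relation
\[
\mathcal G_k^{\mathcal Q}=k\,(k+n+1)\,\beta_2\big((k+n)a_1+\beta_0\big)\,\mathcal G_{k-1}^{\mathcal Q},
\]
which is the $\mathcal Q$-counterpart of the recursion appearing in the proof of Theorem~\ref{ThmIII.3}.

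Iterating from $\mathcal G_0^{\mathcal Q}=\int_{\mathcal S}\mathcal W^{\mathcal Q}(\zeta)\,d\zeta=1$ then yields $\mathcal G_k^{\mathcal Q}=\prod_{m=1}^k\mu_m$, and I would finish by telescoping the four factors through the Pochhammer identities $\prod_{m=1}^k m=k!$, $\prod_{m=1}^k(m+n+1)=(n+2)_k$, $\prod_{m=1}^k\beta_2=\beta_2^{\,k}$, and $\prod_{m=1}^k\big((m+n)a_1+\beta_0\big)=a_1^{\,k}\,(n+1+\tfrac{\beta_0}{a_1})_k$, which assemble to give exactly \eqref{eq42}.

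The only genuinely delicate step is the first one. Unlike the terminating family of Theorem~\ref{ThmIII.3}, the sequence $\{\mathcal Q_k^n\}$ is infinite, so I must verify $\mu_k\neq 0$ for all $k\ge 1$ before the triangular-inversion and norm-recursion arguments can be invoked; the sign of $\mathcal G_k^{\mathcal Q}$ — hence whether \eqref{eq42} is a true squared norm or a signed one — is then dictated by the signs of $\beta_2$ and $(k+n)a_1+\beta_0$, exactly in the spirit of the interlacing discussion following \eqref{eq16}. As a consistency check I would confirm that these same denominators are precisely the weights appearing in the already-established Christoffel–Darboux formula \eqref{eq40}, which is the form the kernel $\sum_j \mathcal Q_j^n(\zeta_1)\mathcal Q_j^n(\zeta_2)/\mathcal G_j^{\mathcal Q}$ must take for monic orthogonal polynomials.
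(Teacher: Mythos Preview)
Your proposal is correct and follows essentially the same route as the paper: multiply the three-term recurrence \eqref{eq39} by an appropriate power of $\zeta$ times the weight, integrate, use orthogonality of lower powers to kill all but two terms, and solve the resulting two-term recursion for $\mathcal G_k^{\mathcal Q}$ as a telescoping product of Pochhammer symbols. The paper's version is terser—it multiplies \eqref{eq39} as written by $\zeta^{k-2}W(\zeta)$ rather than re-indexing first—and it does not pause to invoke Favard explicitly, verify the non-degeneracy condition $\mu_k\neq 0$, or introduce a separate weight $\mathcal W^{\mathcal Q}$; your added care on these points is a welcome refinement rather than a different argument.
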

\begin{proof} The proof follows by multiplying the recurrence relation  \eqref{eq39} by $\zeta^{k-2}W(\zeta)$ and integrating over $\zeta$. This procedure implies the two-term recurrence relation 
 \begin{equation}\label{eq43}
\mathcal G_{k}^{\mathcal Q}=k\,(k+n+1) \beta_2 \big( (k+n)a_1+\beta_0 \big) \mathcal G_{k-1}^{\mathcal Q} \, ,\qquad k\geq 1 \, .
\end{equation}
where $\mathcal G_{k}^{\mathcal Q}=\int_{\mathcal S} |\mathcal Q_{k}^n(\zeta)|^2 \mathpzc{W}(z)dz=\int_{\mathcal S}\zeta^{k}Q_{k}^n(\zeta)d\zeta$,
 with a solution given by \eqref{eq42}.
\end{proof}
\section{Spectra generated by the interaction potential $V(r)=-v/(r+1) $}
\noindent The constructive solutions of the Generalized Confluent Heun Equation discussed in the previous section allow to establish the exact solutions of the Schr\"{o}dinger Equation \eqref{eq2} subject to certain values of the potential parameter.
Denote $\mathscr{E}=\sqrt{-E_{nk}}$, the polynomial solutions of equation \eqref{eq5} demand that
\begin{align}\label{eq44}
\sqrt{-E_{nk}} =\dfrac{v}{2n+k-1} \quad or \quad v=(2n+k-1)\, \mathscr{E} \, , \quad n=0,1,\dots \, , \quad k=d+2\ell \, , \quad d>1 \, .
\end{align}
For fixed $n$ the differential equation \eqref{eq5}, namely
\begin{align*}
\left(r^2+r\right) f''(r)+\left(-2 \sqrt{-E_{nk}}\, r^2+\left(-2 \sqrt{-E_{nk}}+k-1\right)\,r+k-1\right) f'(r)+\sqrt{-E_{nk}}\left(2\,n\, \,r+1-k\right)f(r)=0
\end{align*}
has polynomial solutions $f_n(r)=\sum_{j=0}^n \mathcal{C}_j\,r^j$ with polynomial coefficients $\mathcal{C}_j\equiv  \mathcal{C}_j(\mathscr{E})$  evaluated using the interesting three-term recurrence relation
\begin{align}\label{eq45}
\mathcal{C}_{-1}&=0 \, , \qquad \mathcal{C}_{0}=1 \, , \notag\\
\mathcal{C}_{j+1}&= \left[ \frac{2j+k-1}{(j+1)\left(j+k-1\right)}\left(\mathscr{E}-\frac{j\,(j+k-2)}{2j+k-1}\right) \right] \mathcal{C}_{j} - \left[ \frac{2\,\left(n+1-j\right)  }{(j+1)\left(j+k-1\right)} \mathscr{E}\right]\,\mathcal{C}_{j-1} \, , \quad j=0,1,\dots, n-1 \, .
\end{align}
These polynomial solutions may instead expressed as $f_n(r)=\sum\limits_{j=0}^n \mathpzc{P}_{j}^n(\mathscr{E})\,r^j / \left[ j!\,\left(k-1\right)_j \right]$ with coefficients, $\{\mathpzc{P}_{j}^n(\mathscr{E})\}_{j=0}^n$, evaluated using
\begin{align}\label{eq46}
\mathpzc{P}_{-1}^n(\mathscr{E})&=0 \, , \qquad \mathpzc{P}_{0}^n(\mathscr{E})=1 \, , \notag\\
\mathpzc{P}_{j+1}^n(\mathscr{E})&+ \big[ j(j+k-2)-\mathscr{E}(k+2j-1) \big] \mathpzc{P}_j^n(\mathscr{E})+ \big[ 2\,\mathscr{E}\,j\,(n+1-j) (j+k-2) \big] \mathpzc{P}_{j-1}^n(\mathscr{E}) =0  \, , \quad j=1,\dots, n-1 \, .
\end{align}
The interesting recurrence relations such as \eqref{eq45} and \eqref{eq46} have been the focus of several important investigations from the mathematical analysis point of view. We refer the reader to the work of Leopold \cite{leopold} for a thorough discussion regarding the possible characterization of the roots of such recurrence relations. Via \cite[Theorem 2.2]{leopold}, the roots of the polynomials generated by \eqref{eq46}, and therefore the eigenvalues of the corresponding Schro\"{o}dinger Equation \eqref{eq2}, are all simple, positive real, and may be generated by the  vanishing of the determinant
\vskip0.1true in
\begin{minipage}{0.5\textwidth} \centering
	\begin{center}
$\Delta_{n+1}$~~=~~\begin{tabular}{|lllllll|}
 $S_0$&$T_1$& & & & & \\
 $\gamma_1$&$S_1$&$T_2$& & & & \\
 & $\gamma_2$&$S_2$&$T_3$& & & \\
 & &$\ddots$&$\ddots$&$\ddots$& &\\ 
 & & &$\gamma_{n-2}$&$S_{n-2}$&$T_{n-1}$&  \\
 & & & &$\gamma_{n-1}$&$S_{n-1}$&$T_n$\\
 & & & & &$\gamma_{n}$&$S_n$\\
\end{tabular}~~, \quad
\end{center}
\end{minipage} \quad
\begin{minipage}{0.47\textwidth}
where\begin{align}
S_j&=j (k+j-2)-\mathscr{E}(k+2 j-1) \, ,\notag\\
T_j&=j(j+k-2) \, ,\notag\\
\gamma_j&=2 (n-j+1)\mathscr{E} \, .\label{eq47}
\end{align}
\end{minipage}
\vskip0.1true in
\noindent Further, between every two consecutive roots of the polynomial $\Delta_{n+1}=0$ lies a root of the polynomial $\Delta_n=0$ for $n=1,2,\dots$.  The first few such polynomial solutions are display in Table I. 
\begin{sidewaystable}
    \centering
\begin{tabular}{||l | l | l||} 
 \hline
 $n$ & $\{\mathpzc{P}_{j}^n(\mathscr{E})\}_{j=0}^{n+1}$,\qquad $\mathscr{E}=v/(2n+k-1),\quad n=0,1,2,\dots.$ & $f_n(r):$ Critical Polynomial $\mathpzc{P}_{n+1}^n(\mathscr{E})=0$\\ [0.5ex] 
 \hline\hline
 1 & $\mathpzc{P}_{0}^1(\mathscr{E})=1$ & $f_1(r)=1+ \mathscr{E}\,r$\\ 
~ & $\mathpzc{P}_{1}^1(\mathscr{E})=\mathscr{E}(k-1)$ & ~\\ 
~ & $\mathpzc{P}_{2}^1(\mathscr{E})=(k^2-1)\mathscr{E}(\mathscr{E}-1)$ & $\mathpzc{P}_{2}^1(\mathscr{E})=0$\\ [0.5ex]\hline
2 & $\mathpzc{P}_{0}^2(\mathscr{E})=1$ & $f_2(r)=1+ \mathscr{E}\,r+\frac{\mathscr{E}(\mathscr{E}(k+1)-k-3)}{2k}\,r^2$\\ 
~ & $\mathpzc{P}_{1}^2(\mathscr{E})=\mathscr{E}(k-1)$ & ~\\ 
 ~& $\mathpzc{P}_{2}^2(\mathscr{E})=\mathscr{E} (k-1) (\mathscr{E} (1+k)-k-3)$&~\\ 
 ~ & $\mathpzc{P}_{3}^2(\mathscr{E})=\mathscr{E} (1-k) (3+k) \left(\mathscr{E}(\mathscr{E}-3) (1+k)+2 k\right)$ & $\mathpzc{P}_{3}^2(\mathscr{E})=0$\\ [0.5ex]\hline
3 & $\mathpzc{P}_{0}^3(\mathscr{E})=1$ & $f_3(r)=1+ \mathscr{E}\,r+\frac{\mathscr{E}(\mathscr{E}(k+1)-k-5)}{2k}\,r^2$\\ 
~ & $\mathpzc{P}_{1}^3(\mathscr{E})=\mathscr{E}(k-1)$ & $-\frac{\mathscr{E}(\mathscr{E}^2 (k+1) (k+3)-3 \mathscr{E} (k+1) (k+5)+2 k (k+5))}{6k(k+1)}\,r^3$\\ 
 ~& $\mathpzc{P}_{2}^3(\mathscr{E})=\mathscr{E} (k-1) (\mathscr{E} (1+k)-k-5)$&~\\ 
 ~ & $\mathpzc{P}_{3}^3(\mathscr{E})=\mathscr{E} (1-k)\left(\mathscr{E}^2(k+1)(k+3)-3\mathscr{E}(1+k)(5+k)+2 k(5+k)\right)$ & ~\\ 
 ~& $ \mathpzc{P}_{4}^3(\mathscr{E})=\mathscr{E}(k-1) (k+5) \left((\mathscr{E}^3 -6\mathscr{E}^2) (k+1) (k+3)\right.$&\\
  & $\left.+\mathscr{E} \left(11 k^2+34 k+15\right)-6 k (k+1)\right)$& $\mathpzc{P}_{4}^3(\mathscr{E})=0$\\ \hline
 4 & $\mathpzc{P}_{0}^4(\mathscr{E})=1$ & $f_4(r)=1+ \mathscr{E}\,r+\frac{\mathscr{E}(\mathscr{E}(k+1)-k-7)}{2k}\,r^2$\\ 
~ & $\mathpzc{P}_{1}^4(\mathscr{E})=\mathscr{E}(k-1)$ & $-\frac{\mathscr{E}\left(\mathscr{E}^2 (k+1) (k+3)-3 \mathscr{E} (k+1) (k+7)+2 k (k+7)\right)}{6 k (k+1)}\,r^3$\\ 
 ~& $\mathpzc{P}_{2}^4(\mathscr{E})=\mathscr{E} (k-1) (\mathscr{E} (1+k)-k-7)$&$+\frac{\mathscr{E}\left(\mathscr{E}^3 (k+1) (k+3) (k+5)-6 \mathscr{E}^2 (k+1) (k+3) (k+7)+\mathscr{E} (k+3) (k+7) (11 k+7)-6 k (k+1) (k+7)\right)}{24 k (k+1) (k+2)}\,r^4$\\ 
 ~ & $\mathpzc{P}_{3}^4(\mathscr{E})=\mathscr{E} (1-k) \left(\mathscr{E}^2 (k+1) (k+3)-3 \mathscr{E} (k+1) (k+7)+2 k (k+7)\right)$ & ~\\ 
 ~& $ \mathpzc{P}_{4}^4(\mathscr{E})=\mathscr{E} (k-1) \left(\mathscr{E}^3 (k+1) (k+3) (k+5)-6 \mathscr{E}^2 (k+1) (k+3) (k+7)\right.$&~\\
 ~& $\left. +\mathscr{E} (k+3) (k+7) (11 k+7)-6 k (k+1) (k+7)\right)$&\\
 ~ &$\mathpzc{P}_{5}^4(\mathscr{E})=\mathscr{E} (1-k) (k+7) (\mathscr{E}^4 (k+1) (k+3) (k+5)-10 \mathscr{E}^3 (k+1) (k+3) (k+5)$&\\
 ~&$+\mathscr{E}^2 (k (5 k (7 k+57)+637)+339)-2\mathscr{E} (k (k (25 k+156)+239)+84)+24 k (k+1) (k+2))$ &$\mathpzc{P}_{5}^4(\mathscr{E})=0$ \\
[1ex] 
 \hline\end{tabular}
 \caption{The polynomial solutions of the differential equation \eqref{eq5} and the associated first few sequences of orthogonal polynomials $\{\mathpzc{P}_{j}^n(\mathscr{E})\}_{j=0}^{n+1}$}
\end{sidewaystable}
\vskip0.1true in
\noindent An illustration of the factorization property for the polynomials which generate  the exact solutions of the Schr\"{o}dinger Equation \eqref{eq2}, we report the following few examples: For $n=1$,
\begin{align*} 
\mathpzc{P}_{0}^1(\mathscr{E})=1,\quad  
\mathpzc{P}_{1}^1(\mathscr{E})&=(k-1) \mathscr{E},\\ 
\mathpzc{P}_{2}^1(\mathscr{E})&=(k-1)(k+1)\mathscr{E} (\mathscr{E}-1),\qquad (The~critical ~ polynomial) \\
\mathpzc{P}_{3}^1(\mathscr{E})&=\mathpzc{P}_{2}^1(\mathscr{E})(\mathscr{E}(k+3)-2k),\\ 
\mathpzc{P}_{4}^1(\mathscr{E})&=\mathpzc{P}_{2}^1(\mathscr{E})(\mathscr{E}^2 (k+3) (k+5)-\mathscr{E} (k+3) (5 k+1)+6 k (k+1)),\cdots &=\cdots
\end{align*}
where
\begin{align*} 
\mathpzc{Q}_{0}^1(\mathscr{E})&=1,\\ 
\mathpzc{Q}_{1}^1(\mathscr{E})&=\mathscr{E}(k+3)-2k,\\ 
\mathpzc{Q}_{2}^1(\mathscr{E})&=\mathscr{E}^2 (k+3) (k+5)-\mathscr{E} (k+3) (5 k+1)+6 k (k+1),\notag\\
\mathpzc{Q}_{3}^1(\mathscr{E})&=\mathscr{E}^3 (k+3) (k+5) (k+7)-3 \mathscr{E}^2 (k+3) (k+5) (3 k+1)+2 \mathscr{E} (k+1) \left(13 k^2+47 k+12\right)-24 k (k+1) (k+2),\\\cdots &=\cdots
\end{align*}
Further, for $n=2$,
\begin{align*} 
\mathpzc{P}_{0}^2(\mathscr{E})=1,\quad 
\mathpzc{P}_{1}^2(\mathscr{E})&=(k-1) \mathscr{E},\\
\mathpzc{P}_{2}^2(\mathscr{E})&=(k-1) \mathscr{E}(\mathscr{E}(k+1)-k-3),\\ 
\mathpzc{P}_{3}^2(\mathscr{E})&=(k-1)(k+3)\mathscr{E} (\mathscr{E}^2(k+1)-3\mathscr{E}(k+1)+2k),\qquad (The~critical ~ polynomial) \\
\mathpzc{P}_{4}^2(\mathscr{E})&=\mathpzc{P}_{3}^2(\mathscr{E})(\mathscr{E}(k+5)-3(k+1))\\ 
\cdots &=\cdots
\end{align*}
For which, the factorized polynomials read
\begin{align*}
\mathpzc{Q}_{0}^2(\mathscr{E})=1,\quad
\mathpzc{Q}_{1}^2(\mathscr{E})&=\mathscr{E} (k+5)-3 (k+1),\\
\mathpzc{Q}_{2}^2(\mathscr{E})&=\mathscr{E}^2 (k+5) (k+7)-\mathscr{E}(k+5) (7 k+9)+12 (k+1) (k+2),\\
\mathpzc{Q}_{3}^2(\mathscr{E})&=\mathscr{E}^3 (k+5) (k+7) (k+9)-6 \mathscr{E}^2(k+5) (k+7) (2 k+3)+\mathscr{E} (47 k^3+409 k^2+993 k+711)\\
&-60 (k+1) (k+2) (k+3),\\
\cdots &=\cdots
\end{align*}
Generally, the polynomials $\{\mathpzc{Q}_{j}^n\}_{j=0}^\infty$ are generated using the recurrence relation (see equation \eqref{eq39}):
\begin{align}\label{eq48}
\mathpzc{Q}_{-1}^n(\mathscr{E})&=0 \, , \qquad \mathpzc{Q}_{0}^n(\mathscr{E})=1 \, , \notag\\
\mathpzc{Q}_{j+1}^n(\mathscr{E})&+((n+j+1) (n+k+j-1) - \mathscr{E} ( 2 (n+j+1)+ k-1 ))\mathpzc{Q}_{j}^n(\mathscr{E})-2j\mathscr{E} (n+j+1+j) (n+k+j-1)\mathpzc{Q}_{j-1}^n(\mathscr{E})=0.
\end{align}
\vskip0.1true in
\noindent  We close this section by giving explicitly the first few exact eigenstate solutions to the Schr\"{o}dinger Equation \eqref{eq2}, namely,
\begin{equation}\label{eq49}
\left[- \frac{d^2}{dr^2}+ \frac{(k-1)(k-3)}{4\,r^2} - \dfrac{(2n+k-1)\mathscr{E}}{r+1}+{\mathscr{E}}^2\right] \psi_{{\mathfrak n}k}(r)=0,
\quad
\psi_{{\mathfrak n}k}(r)=r^{(k-1)/2}e^{-\,\mathscr{E}\,r} \, \sum_{j=0}^n \mathcal{C}_j(\mathscr{E})\, r^j.\end{equation}
(here $\mathfrak n$ is the number of the zeros in the wave function in contrast to the degree $n$ of the polynomial solutions). 
\vskip0.1true in
\noindent For $n=1$: the eigenenergy $\mathscr{E} = 1$ (or $E_{0k} = -1$) and thus the ground-state solutions, for $d>1$, of the Schr\"{o}dinger Equation,
\begin{align}\label{eq50}
\left[-\frac{d^2}{dr^2}+\frac{(k-1)(k-3)}{4\,r^2}-\dfrac{k+1}{r+1}\right] \psi_{{\mathfrak n}k}(r)=-\psi_{{\mathfrak n}k}(r) \, ,
\end{align}
are given by
\begin{align}\label{eq51}
\psi_{0k}(r)&=r^{(k-1)/2}e^{-r}\left[1+r\right], \, \qquad (k=d+2\ell).
\end{align}

\begin{figure}[ht]
\label{fig:fig1}
\centering
\includegraphics[width=5cm,height=4cm]{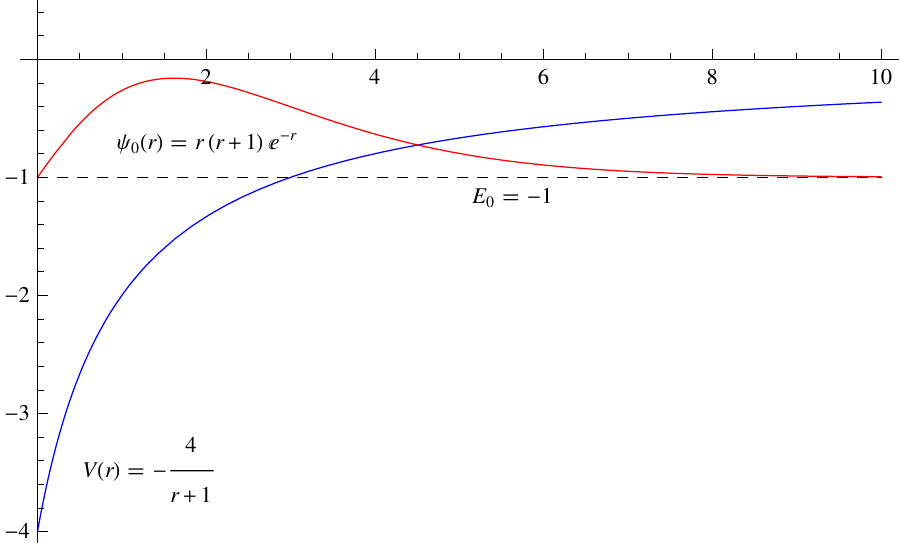} \qquad
\includegraphics[width=5cm,height=4cm]{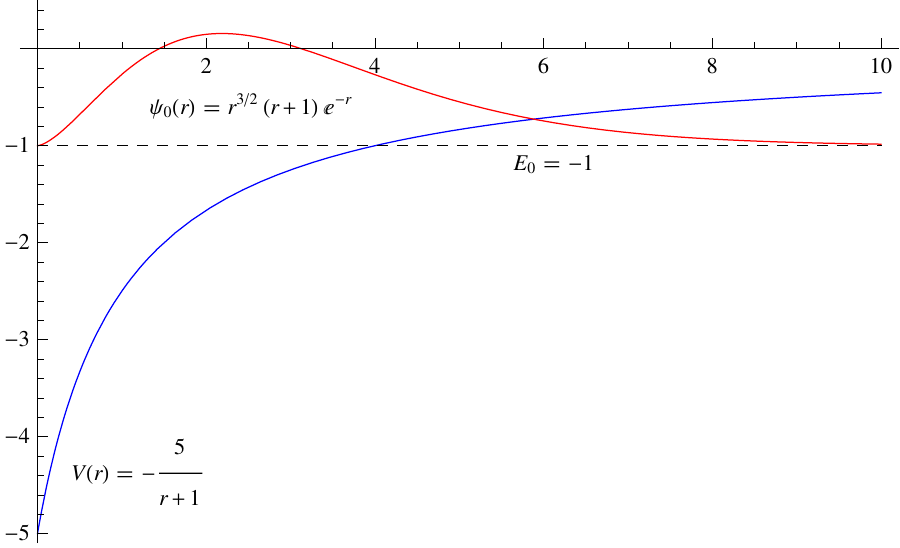} \qquad
\includegraphics[width=5cm,height=4cm]{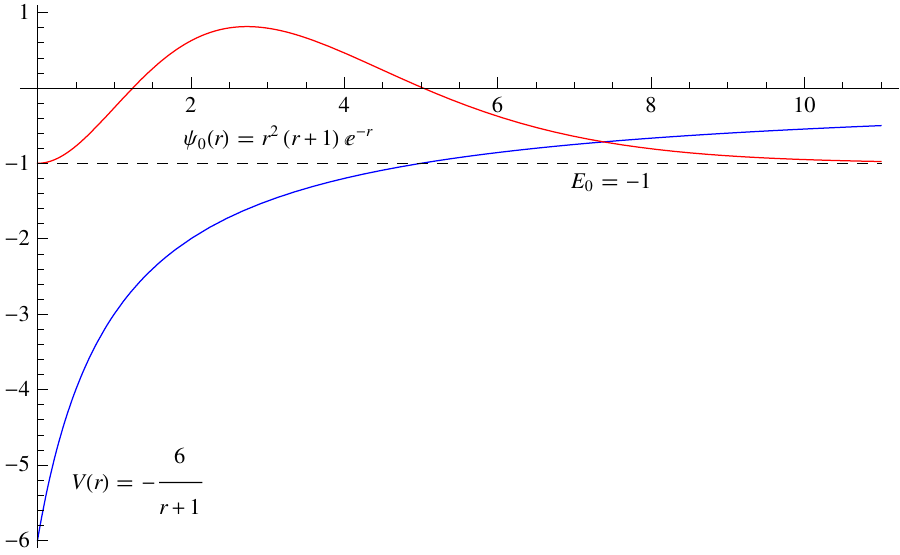} 
\caption{Plots of the potential $V(r) =-(k+1)/(1+r)$, $k=3,4,5$ along with the exact eigenvalue $E_0=-1$ and the exact (un-normalized)
wave function $\psi_0(r)$ for Schr\"odinger equation \eqref{eq50}.}
\end{figure}

\noindent{For $n=2$:} the eigenenergies $\mathscr{E}_i,i=1,2,$ of the Schr\"{o}dinger equation,
\begin{align}\label{eq52}
\left[- \frac{d^2}{dr^2}+ \frac{(k-1)(k-3)}{4\,r^2}-\dfrac{(k+3)\mathscr{E}_i}{r+1}\right] \psi_{{\mathfrak n}k}(r)=-\mathscr{E}_i^2\psi_{{\mathfrak n}k}(r) \, ,
\end{align}
are the roots of the quadratic equation,
\begin{align}\label{eq53}
(k+1)\, \mathscr{E}_i^2 -3\,(k+1)\mathscr{E}_i+2\, k=0 \, , \qquad i=1,2 \, ,
\end{align}
namely,
\begin{align}\label{eq54}
\mathscr{E}_1=\dfrac{3}{2}-\dfrac12\sqrt{\frac{k+9}{k+1}} \, , \qquad \text{and} \qquad \mathscr{E}_2=\dfrac{3}{2}+\dfrac12\sqrt{\frac{k+9}{k+1}}, \, \qquad (E_i\equiv E_{{\mathfrak n}k}=-\mathscr{E}_i^2)
\end{align}
with corresponding eigenstate solutions given by
\begin{align}\label{eq55}
\psi_{{\mathfrak n}k}(\mathscr{E}_i;r)&=r^{(k-1)/2}e^{-\,\mathscr{E}_i\,r}\left[1+\mathscr{E}_i r+\frac{\mathscr{E}_i (\mathscr{E}_i (1+k)-k-3)}{2\, k}\,r^2\right] \, , \quad i=1,2 \, .
\end{align}

\begin{figure}[ht]
\label{fig:fig2}
\centering
\includegraphics[width=7cm,height=5cm]{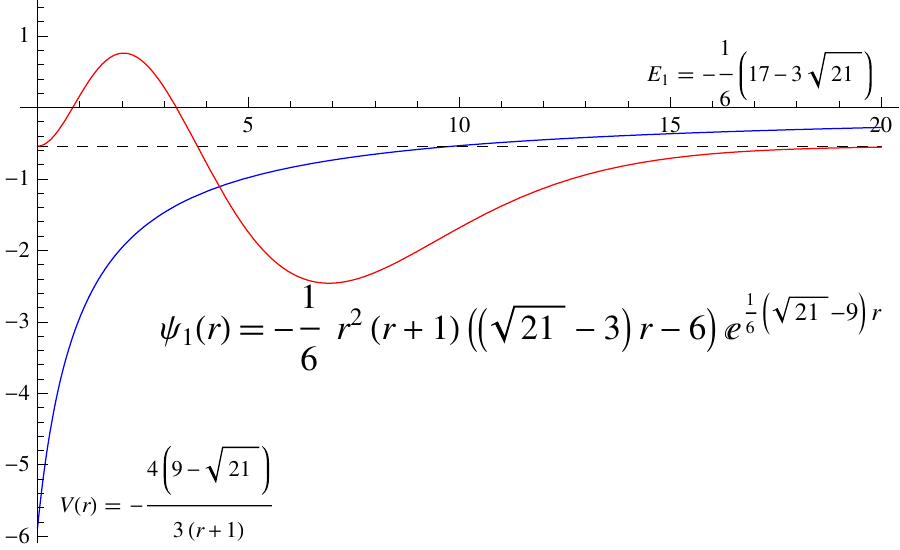} \qquad
\includegraphics[width=7cm,height=5cm]{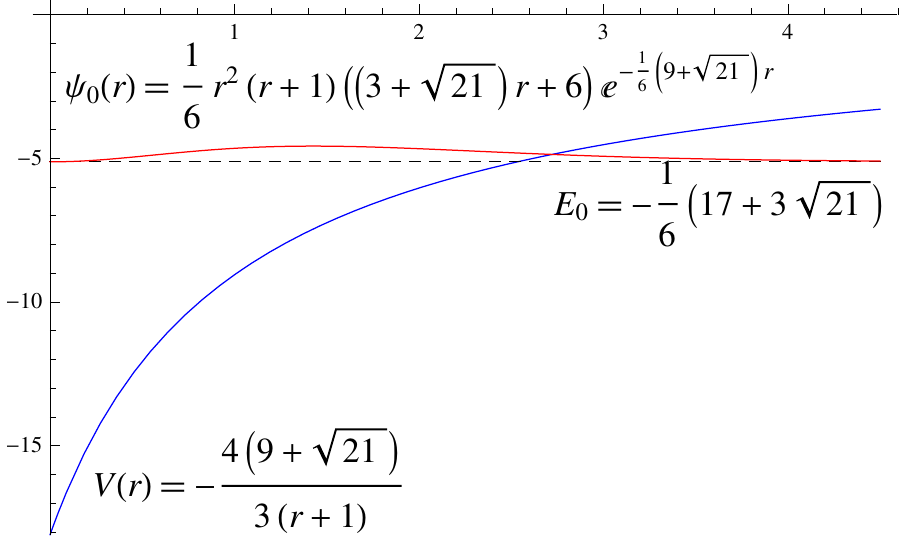}
\caption{Plot of the potential $V(r) =-{(k+3)\mathscr{E}_i}/{(r+1)}$, $k=5,i=1,2$ along with the exact eigenvalues $E_i,i=1,2$ and the exact (un-normalized)
wave functions \eqref{eq55} for Schr\"odinger equation \eqref{eq52}.}
\end{figure}

\noindent{For $n=3$:} the eigenenergies $\mathscr{E}_i,i=1,2,3,$ of the Schr\"{o}dinger Equation,
\begin{align}\label{eq56}
\left[- \frac{d^2}{dr^2}+ \frac{(k-1)(k-3)}{4\,r^2}-\dfrac{(k+5)\mathscr{E}_i}{r+1}\right] \psi_{{\mathfrak n}k}(r)=-\mathscr{E}_i^2\psi_{{\mathfrak n}k}(r) \, ,
\end{align}
are the roots of the cubic equation, 
\begin{align}\label{eq57}
\big[ (1+k) \,(3+k) \big] \mathscr{E}_i^3 - \big[ 6 \, (1+k) (3+k) \big] \mathscr{E}_i^2 + \big[ 15+34k+11 k^2 \big] \mathscr{E}_i -6 \,k \,(1+k)=0 \, , \qquad i=1,2,3 \, ,
\end{align}
with corresponding eigenstate solutions given by
\begin{align}\label{eq58}
\psi_{{\mathfrak n}k}(\mathscr{E}_i;r)&=r^{(k-1)/2}e^{-\,\mathscr{E}_i\,r}\notag\\
&\times\left[1+\mathscr{E}_i r+\frac{\mathscr{E}_i (\mathscr{E}_i (1+k)-k-5)}{2 k}\,r^2+\frac{\mathscr{E}_i  \left(  \mathscr{E}_i^2 (1+k) (3+k)
-3   \mathscr{E}_i(1+k) (5+k)+2 k (5+k)\right)}{6\,   k\, (1+k)}\,r^3\right] \, .
\end{align}
\noindent{For $n=4$:} the eigenenergies $\mathscr{E}_i,i=1,2,3,4,$ of the Schr\"{o}dinger Equation,
\begin{align}\label{eq59}
\left[- \frac{d^2}{dr^2}+ \frac{(k-1)(k-3)}{4\,r^2}-\dfrac{(k+7)\mathscr{E}_i}{r+1}\right] \psi_{{\mathfrak n}k}(r)=-\mathscr{E}_i^2\psi_{{\mathfrak n}k}(r)
\end{align}
are the roots of the quartic equation,
\begin{align}\label{eq60}
& \big[ (1+k) (3+k) (5+k) \big] \mathscr{E}_i^4 - \big[ 10 (1+k) (3+k) (5+k) \big] \mathscr{E}_i^3 \notag \\ 
& + \big[ 339 + 637 k + 285 k^2 + 35 k^3 \big] \mathscr{E}_i^2 -  \big[ 2 (84 + 239 k + 156 k^2 + 25 k^3) \big] \mathscr{E}_i + 24 k (1+k) (2+k)=0 \, , \quad i=1,2,3,4 \, ,
\end{align}
with the corresponding eigenstate solution given by
\begin{align}\label{eq61}
\psi_{{\mathfrak n}k}( \mathscr{E}_i;r)&
=r^{(k-1)/2}e^{-\,\mathscr{E}_i\,r}\notag\\
&\times\left[1+\mathscr{E}_i r+\frac{\mathscr{E}_i (\mathscr{E}_i (1+k)-k-7)}{2\, k}\,r^2+\frac{\mathscr{E}_i  \left(\mathscr{E}_i^2 (1+k) (3+k)
-3 \mathscr{E}_i (1+k) (7+k)+2 k (7+k)\right)}{6\,  k\, (1+k)}\,r^3\right.\notag\\
&\left.+\frac{\mathscr{E}_i^3 (1+k) (3+k) (5+k)-6  \mathscr{E}_i^2 (1+k) (3+k) (7+k)+  \mathscr{E}_i (3+k) (7+k) (7+11 k)-6 k (1+k) (7+k)}{24 k (1+k) (2+k)}\right]\,r^4 \, .
\end{align}
Higher order solutions may be obtained similarly using \eqref{eq45} or \eqref{eq46}.
\subsection{A note on the $3$-dimensional softcore Coulomb potential}
\noindent From the previous section, the eigenvalues of the Schr\"{o}dinger Equation,
\begin{equation}\label{eq62}
\left[-\dfrac{d^2}{dr^2}-\dfrac{v}{r+1}\right] \psi_{{\mathfrak n}k}(r)=-\mathscr{E}^2\,  \psi_{{\mathfrak n}k}(r) \, , \qquad 2\,(n+1)\,\mathscr{E}=v\qquad (v=b\, e^2 Z)\, ,
\end{equation}
are the roots of the polynomials 
\begin{align}\label{eq63}
n&=1\Longrightarrow \mathscr{E} -1=0 \, , \notag\\
n&=2\Longrightarrow 2\, \mathscr{E}^2 -6\mathscr{E}+3=0 \, , \notag\\
n&=3\Longrightarrow \mathscr{E}^3-6 \mathscr{E}^2+9\mathscr{E}-3=0 \, , \notag\\
n&=4\Longrightarrow 2 \mathscr{E}^4-20 \mathscr{E}^3+60 \mathscr{E}^2-60 \mathscr{E}+15=0 \, , \notag\\
n&= 5 \Longrightarrow 2\mathscr{E}^5-30 \mathscr{E}^4+150 \mathscr{E}^3-300 \mathscr{E}^2+225 \mathscr{E}-45=0 \, .
\end{align}
These polynomials are generated using the following three recurrence relations:
\begin{align} \label{eq64}
P_n(\mathscr{E})- \big[ 2(n+1)(\mathscr{E}-n) \big] P_{n-1}(\mathscr{E})+ \big[ n^2(n^2-1) \big] P_{n-2}(\mathscr{E})=0 \, , \quad n= 3,4,\dots \, ,
\end{align}
with initial conditions
\begin{equation*}
P_1(\mathscr{E})=8\mathscr{E}(\mathscr{E}-1) \, , \qquad \text{and} \qquad P_2(\mathscr{E})=24\mathscr{E}(3-6\mathscr{E}+2\mathscr{E}^2) \, .
\end{equation*}
Setting $P_n(\mathscr{E}) = (n+1)! Q_n(\mathscr{E})$ and dividing throughout by $(n+1)!$ yields the following recurrence relation
\begin{align} \label{eq65}
Q_n(\mathscr{E})- \big[ 2(\mathscr{E}-n) \big]\, Q_{n-1}(\mathscr{E}) + \big[ n(n-1) \big] Q_{n-2}(\mathscr{E})=0 \, , \quad n= 3,4,\dots \, , 
\end{align}
with initial conditions 
\begin{equation*}
Q_1(\mathscr{E})=4\mathscr{E}(\mathscr{E}-1) \, , \qquad \text{and} \qquad Q_2(\mathscr{E})=4\mathscr{E}(3-6\mathscr{E}+2\mathscr{E}^2) \, .
\end{equation*}
Finally, by writing $Q_n(\mathscr{E}) = 4\,E\, V_n(E)$, we see that $V_n(\mathscr{E})$ satisfies
\begin{align} \label{eq66}
V_n(\mathscr{E})- \big[ 2(\mathscr{E}-n) \big] V_{n-1}(\mathscr{E})+ \big[ n(n-1) \big] V_{n-2}(\mathscr{E})=0 \, , \quad n= 3,4,\dots \, ,
\end{align}
which is essentially the Laguerre polynomial 
\begin{align} \label{eq67}
V_n(\mathscr{E})=(-1)^n\,n!\,L_{n}^{(1)}(2\mathscr{E}) \, .
\end{align}
Thus, in the 3-dimenional case, the eigenvalues of the softcore Coulomb potential are given as the roots of the Laguerre Polynomials
\begin{align} \label{eq68}
L_{n}^{(1)}(2\mathscr{E})=0 \, .
\end{align}
The distribution of the the roots  for $P_j(\mathscr{E})\equiv L_j^{(1)}(2\mathscr{E})=0, j=1,2,\dots,6$ are illustrated in the following figure: 
\begin{figure}[!h]
\centering 
\includegraphics[width=\textwidth,height=1.5in]{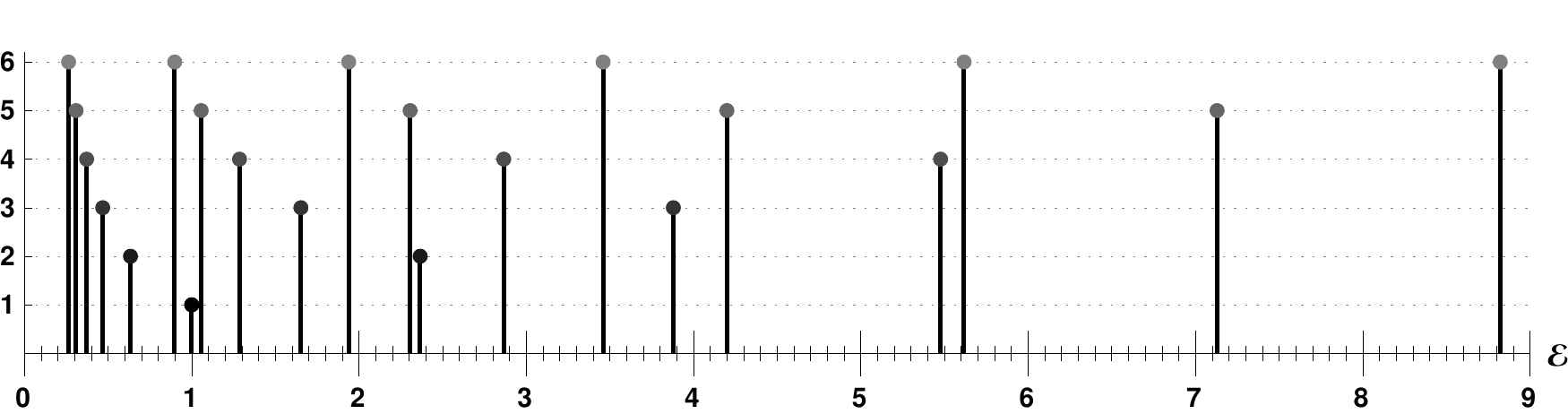}~~~~~~~~
\caption{The root distribution of the Laguerre Polynomials, $L_{n}^{(1)}(2\mathscr{E})=0$, for $n=1,2,3,4,5,6$.} 
\label{Fig:f1}
\end{figure}  

\section{Conclusions}
\noindent  The analysis of even apparently simple physical problems may lead to a wealth of mathematical structure and results. The softcore Coulomb potential studied in the present work is a case in point. The mathematics obtained by analyzing this elementary quantum system is remarkable: associated with each polynomial solution of the confluent Heun equation, there are two sequences of orthogonal polynomials, one finite and the other an infinite sequence generated by factoring out the finite one. The eigenenergies of the quantum system governed by Schr\"odinger's equation are the real distinct roots of the polynomial associated with the finite sequence. 

\section{Appendix}

\noindent For fixed $n$, the orthogonality of the polynomial solutions $f_{n_i}(r), 1\leq i\leq n$, of the Generalized Heun equation
\begin{align}\label{A.1}
\dfrac{1}{w(r)} \dfrac{d}{dr}\left[r^{\beta_0/\alpha_1} e^{\beta_2\, r/\alpha_2}(\alpha_1+\alpha_2\, r)^{-\frac{\alpha_1\beta_2}{\alpha_2^2}-\frac{\beta_0}{\alpha_1}+\frac{\beta_1}{\alpha_2}}\dfrac{df_{n_i}(r)}{dr}\right]
-n\,\beta_2\,r\, f_{n_i}(r)=\varepsilon_{0,n_i}\,f_{n_i}(r),\qquad n=0,1,2,\dots,
\end{align}
with respect to the weight function
\begin{align}\label{A.2}
w(r)=r^{\frac{\beta_0}{\alpha_1}-1} e^{\frac{\beta_2 r}{\alpha_2}} (\alpha_1+\alpha_2 r)^{\frac{\beta_1}{\alpha_2}-\frac{\alpha_1\beta_2}{\alpha_2^2}-\frac{\beta_0}{\alpha_1}-1},\qquad \left(\frac{\beta_0}{\alpha_1}>0,~ \frac{\beta_2}{\alpha_2}<0,~~\frac{\beta_1}{\alpha_2}-\frac{\alpha_1\beta_2}{\alpha_2^2}-\frac{\beta_0}{\alpha_1}\geq 0\right),
\end{align}
is given by
\begin{align}\label{A.3}
\left\{ \begin{array}{ll}
 \int_0^\infty f_{n_i}(r) f_{n_j}(r) w(r)dr=\delta_{ij}, &\mbox{ if $\alpha_1\cdot\alpha_2\geq 0$,} \\ \\
 \int_0^{-\alpha_1/\alpha_2} f_{n_i}(r) f_{n_j}(r) w(r)dr=\delta_{ij}, &\mbox{ if $\alpha_1\cdot \alpha_2<0$,}
       \end{array} \right.\qquad 
\end{align}
up to the normalization constant.  In reference to the domain definition, the normalization of the weight function 
$$\int_0^\infty w(r)dr=1, \qquad or\qquad \int_0^{-\alpha_1/\alpha_2} w(r)dr=1
$$ follows by means of the integral representation of the Kummer function \cite[Eq.13.4.4]{DLMF} to yields
\begin{align}\label{A.4}
\int_0^\infty r^{\frac{\beta_0}{\alpha_1}-1} e^{\frac{\beta_2}{\alpha_2}\,r} (\alpha_1+\alpha_2 r)^{\frac{\beta_1}{\alpha_2}-\frac{\alpha_1\beta_2}{\alpha_2^2}-\frac{\beta_0}{\alpha_1}-1}dr&=\alpha_1^{\frac{\beta_1}{\alpha_2}-\frac{\alpha_1\beta_2}{\alpha_2^2}-\frac{\beta_0}{\alpha_1}-1}\left(\frac{\alpha_1}{\alpha_2}\right)^{\beta_0/\alpha_1}  \Gamma \left(\frac{\beta_0}{\alpha_1}\right) U\left(\frac{\beta_0}{\alpha_1},\frac{\alpha_2\beta_1-\alpha_1\beta_2}{\alpha_2^2},-\frac{\alpha_1\beta_2}{\alpha_2^2}\right),\notag\\
&\hskip1.5true in (\alpha_1\beta_2/\alpha_2^2<0,~\beta_0/\alpha_1>0),
\end{align}
or
\begin{align}\label{A.5}
\int_0^{-\alpha_1/\alpha_2} r^{\frac{\beta_0}{\alpha_1}-1} &e^{\frac{\beta_2}{\alpha_2}\,r} (\alpha_1+\alpha_2 r)^{\frac{\beta_1}{\alpha_2}-\frac{\alpha_1\beta_2}{\alpha_2^2}-\frac{\beta_0}{\alpha_1}-1}dr\notag\\
&=\alpha_1^{\frac{\beta_1}{\alpha_2}-\frac{\beta_0}{\alpha_1}-\frac{\alpha_1\beta_2}{\alpha_2^2}-1} \left(-\frac{\alpha_1}{\alpha_2}\right)^{\frac{\beta_0}{\alpha_1}}
\frac{\Gamma \left(\frac{\beta_0}{\alpha_1}\right)\Gamma \left(-\frac{\beta_0}{\alpha_1}+\frac{\beta_1}{\alpha_2}-\frac{\alpha_1\beta_2}{\alpha_2^2}\right)}{\Gamma \left(\frac{\alpha_2\beta_1-\alpha_1\beta_2}{\alpha_2^2}\right)} {}_1F_1\left(\frac{\beta_0}{\alpha_1};\frac{\alpha_2\beta_1-\alpha_1\beta_2}{\alpha_2^2};-\frac{\alpha_1\beta_2}{\alpha_2^2}\right),\notag\\
&\hskip3true in(\alpha_1/\alpha_2<0,~\alpha_2^2 \beta_0+\alpha_1^2 \beta_2<\alpha_1\alpha_2\beta_1).
\end{align}
The normalization \eqref{A.3} of the exact solutions for the Generalized Heun equatiom
are evaluated using  either of the the integrals
\begin{align}\label{A.6}
\int_0^{-\alpha_1/\alpha_2}& r^{\frac{\beta_0}{\alpha_1}+i+j-1} e^{\frac{\beta_2}{\alpha_2}\,r} (\alpha_1+\alpha_2 r)^{\frac{\beta_1}{\alpha_2}-\frac{\alpha_1\beta_2}{\alpha_2^2}-\frac{\beta_0}{\alpha_1}-1}dr\notag\\
&= \alpha_1^{\frac{\beta_1}{\alpha_2}-\frac{\alpha_1\beta_2}{\alpha_2^2}-\frac{\beta_0}{\alpha_1}-1} \left(-\frac{\alpha_1}{\alpha_2}\right)^{\frac{\beta_0}{\alpha_1}+i+j} \frac{ \Gamma \left(\frac{\beta_0}{\alpha_1}+i+j\right) \Gamma \left(\frac{\alpha_1}{\alpha_2}-\frac{\beta_0}{\alpha_1}-\frac{\alpha_1\beta_2}{\alpha_2^2}\right)}{\Gamma \left(\frac{\alpha_2 \beta_1-\alpha_1 \beta_2}{\alpha_2^2}+i+j\right)} \, _1F_1\left(\frac{\beta_0}{\alpha_1}+i+j;\frac{\alpha_2 \beta_1-\alpha_1\beta_2}{\alpha_2^2}+i+j;-\frac{\alpha_1 \beta_2}{\alpha_2^2}\right),\notag\\
&\hskip3.5true in \left(\frac{\beta_0}{\alpha_1}+i+j>0.~ \frac{\beta_0}{\alpha_1}-\frac{\beta_1}{\alpha_2}+\frac{\alpha_1\beta_2}{\alpha_2^2}<0\right),
\end{align}
or
\begin{align}
\int_0^\infty &r^{\frac{\beta_0}{\alpha_1}+i+j-1} e^{\frac{\beta_2}{\alpha_2}\,r} (\alpha_1+\alpha_2 r)^{\frac{\beta_1}{\alpha_2}-\frac{\alpha_1\beta_2}{\alpha_2^2}-\frac{\beta_0}{\alpha_1}-1}dr\notag\\
&=\frac{\left(\frac{\alpha_1}{\alpha_2}\right)^{\frac{\beta_0}{\alpha_1}+i+j} \Gamma \left(\frac{\alpha_2 \beta_1-\alpha_1 \beta_2}{\alpha_2^2}+i+j-1\right)}{\alpha_1^{\frac{\alpha_1 \beta_2}{\alpha_2^2}+\frac{\beta_0}{\alpha_1}-\frac{\beta_1}{\alpha_2}+1}\left(-\frac{\alpha_1 \beta_2}{\alpha_2^2}\right)^{-\frac{\alpha_1 \beta_2-\alpha_2 (\alpha_2 (i+j-1)+\beta_1)}{\alpha_2^2}}}  \, _1F_1\left(\frac{\alpha_1 \beta_2}{\alpha_2^2}+\frac{\beta_0}{\alpha_1}-\frac{\beta_1}{\alpha_2}+1;\frac{\alpha_1\beta_2}{\alpha_2^2}-\frac{\beta_1}{\alpha_2}-i-j+2;-\frac{\alpha_1\beta_2}{\alpha_2^2}\right)\notag\\
&+\frac{ \left(\frac{\alpha_1}{\alpha_2}\right)^{\frac{\beta_0}{\alpha_1}+i+j} \Gamma \left(\frac{\beta_0}{\alpha_1}+i+j\right) \Gamma \left(-\frac{\beta_1}{\alpha_2}+\frac{\alpha_1 \beta_2}{\alpha_2^2}-i-j+1\right)}{\alpha_1^{\frac{\alpha_1 \beta_2}{\alpha_2^2}+\frac{\beta_0}{\alpha_1}-\frac{\beta_1}{\alpha_2}+1}\Gamma \left(\frac{\beta_0}{\alpha_1}+\frac{\alpha_1 \beta_2}{\alpha_2^2}-\frac{\beta_1}{\alpha_2}+1\right)} \, _1F_1\left(\frac{\beta_0}{\alpha_1}+i+j;-\frac{\alpha_1 \beta_2}{\alpha_2^2}+\frac{\beta_1}{\alpha_2}+i+j;-\frac{\alpha_1 \beta_2}{\alpha_2^2}\right),\notag\\
&\hskip4true in\left(\frac{\beta_0}{\alpha_1}+i+j>0, \frac{\alpha_1 \beta_2}{\alpha_2^2}<0\right),
\end{align}
according to the domain definition.

\section{Acknowledgments}
\medskip
\noindent Partial financial support of this work under Grant Nos.  GP3438 and GP249507 from the Natural Sciences and
Engineering Research Council of Canada is gratefully acknowledged by us (respectively RLH and NS).
\section*{References}
\end{document}